\newtheorem{theorem}{Theorem}[section]
\newtheorem{lemma}[theorem]{Lemma}
\newtheorem{corollary}[theorem]{Corollary}
\newtheorem{remark}[theorem]{Remark}
\newcommand{\eq}[1]{\hyperref[eq:#1]{Equation~(\ref*{eq:#1})}}
\renewcommand{\sec}[1]{\hyperref[sec:#1]{Section~\ref*{sec:#1}}}
\newcommand{\thm}[1]{\hyperref[thm:#1]{Theorem~\ref*{thm:#1}}}
\newcommand{\lem}[1]{\hyperref[lem:#1]{Lemma~\ref*{lem:#1}}}
\newcommand{\cor}[1]{\hyperref[cor:#1]{Corollary~\ref*{cor:#1}}}
\newcommand{\itm}[1]{\hyperref[itm:#1]{\ref*{itm:#1}}}
\newcommand{\app}[1]{\hyperref[app:#1]{Appendix~\ref*{app:#1}}}
\newcommand{\abs}[1]{\left| #1 \right|}
\newcommand{\prn}[1]{\left( #1 \right)}
\newcommand{\brc}[1]{\left\{ #1 \right\}}
\newcommand{\bra}[1]{\left\langle #1 \right\vert}
\newcommand{\ket}[1]{\left\vert #1 \right\rangle}
\newcommand{\bkt}[1]{\left\langle #1 \right\rangle}
\newcommand{\F}[0]{\mathbb{F}}
\newcommand{\Fq}[0]{\mathbb{F}_q}
\newcommand{\I}[0]{\mathbb{I}}
\newcommand{\J}[0]{\mathbb{J}}
\newcommand{\K}[0]{\mathbb{K}}
\newcommand{\R}[0]{\mathbb{R}}
\newcommand{\Z}[0]{\mathbb{Z}}
\newcommand{\C}[0]{\mathbb{C}}
\newcommand{\N}[0]{\mathbb{N}}
\newcommand{\dd}[0]{\mathrm{d}}
\newcommand{\tr}[0]{\mathrm{Tr}}
\newcommand{\FF}[0]{\mathcal{F}}
\newcommand{\supp}[0]{\mathrm{supp}}
\newcommand{\nul}[1]{{}}
\newcommand{\conj}[1]{\overline{{#1}}}
\newcommand{\vc}[0]{\mathbf{c}}
\newcommand{\vf}[0]{\mathbf{f}}
\newcommand{\vu}[0]{\mathbf{u}}
\newcommand{\vx}[0]{\mathbf{x}}
\newcommand{\vy}[0]{\mathbf{y}}
\newcommand{\vz}[0]{\mathbf{z}}
\newcommand{\spn}{\mathrm{span}}
\renewcommand{\Re}{\mathrm{Re}}
\renewcommand{\Im}{\mathrm{Im}}
\begin{document}

\title{Quantum algorithm for multivariate polynomial interpolation}
\author[Jianxin Chen]{Jianxin Chen$^{\ddag}$}
\author[Andrew M.\ Childs]{Andrew M.\ Childs$^{*,\dagger,\ddag}$}
\author[Shih-Han Hung]{Shih-Han Hung$^{*,\ddag}$}

\renewcommand*{\thefootnote}{\fnsymbol{footnote}}
\footnotetext[1]{\textsc{Department of Computer Science, University of Maryland}}
\footnotetext[2]{\textsc{Institute for Advanced Computer Studies, University of Maryland}}
\footnotetext[3]{\textsc{Joint Center for Quantum Information and Computer Science, University of Maryland}}

\renewcommand*{\thefootnote}{\arabic{footnote}}


\begin{abstract}
How many quantum queries are required to determine the coefficients of a degree-$d$ polynomial in $n$ variables?  We present and analyze quantum algorithms for this multivariate polynomial interpolation problem over the fields $\Fq$, $\R$, and $\C$.  We show that $k_{\C}$ and $2k_{\C}$ queries suffice to achieve probability $1$ for $\C$ and $\R$, respectively, where $k_{\C}=\smash{\lceil\frac{1}{n+1}{n+d\choose d}\rceil}$ except for $d=2$ and four other special cases. For $\Fq$, we show that $\smash{\lceil\frac{d}{n+d}{n+d\choose d}\rceil}$ queries suffice to achieve probability approaching $1$ for large field order $q$. The classical query complexity of this problem is $\smash{n+d\choose d}$, so our result provides a speedup by a factor of $n+1$, $\frac{n+1}{2}$, and $\frac{n+d}{d}$ for $\C$, $\R$, and $\Fq$, respectively. Thus we find a much larger gap between classical and quantum algorithms than the univariate case, where the speedup is by a factor of $2$. For the case of $\Fq$, we conjecture that $2k_{\C}$ queries also suffice to achieve probability approaching $1$ for large field order $q$, although we leave this as an open problem.
\end{abstract}

\maketitle


\section{Introduction}\label{sec:introduction}

Let $f(x_1,\ldots,x_n)\in \K[x_1,\ldots,x_n]$ be a polynomial of degree $d$. Suppose $d$ is known and we are given a black box that computes $f$ on any desired input. The polynomial interpolation problem is to determine all the coefficients of the polynomial by querying the black box.

Classically, a multivariate polynomial can be interpolated by constructing a system of linear equations. Invertibility of the Vandermonde matrix implies that $\smash{\binom{n+d}{d}}$ queries are necessary and sufficient to determine all the coefficients.
(Note that one must choose the input values carefully to construct a full-rank Vandermonde matrix for $n>1$ \cite{GS00}.)

Recent work has established tight bounds on the quantum query complexity of interpolating univariate polynomials over a finite field $\Fq$.
In particular, \cite{CvDHS16} developed an optimal quantum algorithm that
makes $\frac{d+1}{2}$ queries to succeed with bounded error and
one more query to achieve success probability $1-O(1/q)$. They also showed that the success probability of the algorithm is optimal among all algorithms making the same number of queries.  Previous work \cite{KK11,MP11} shows that no quantum algorithm can succeed
with bounded error using fewer queries, so the optimal success probability exhibits a sharp transition as the number of queries is increased.

For multivariate polynomials, \cite{CvDHS16} conjectured that a straightforward analog of the univariate algorithm solves the interpolation problem with probability $1-o(1)$ using  $\smash{\lfloor\frac{1}{n+1}\binom{n+d}{d}\rfloor+1}$ queries.  However, while that conjecture is natural, the analysis of the algorithm appeared to require solving a difficult problem in algebraic geometry and was left open.  (In addition, Montanaro considered the quantum query complexity of interpolating a multilinear polynomial \cite{Mon12}, but this is quite different from the general multivariate case.)

To the best of our knowledge, all previous work on quantum algorithms for
polynomial interpolation has focused on finite fields. Cryptographic applications of interpolation typically use finite fields, and the multivariate case could lead to new applications in that domain.  However, polynomial interpolation over infinite fields is also a natural problem, especially considering the ubiquity of real- and complex-valued polynomials in numerical analysis.

In this paper, we propose an approach to the quantum query complexity of polynomial interpolation in the continuum limit. To obtain a well-defined initial state, the algorithm prepares a superposition over a bounded working region. The bounded region limits the precision that can be achieved due to the uncertainty principle, but the algorithm can be made arbitrarily precise by taking an arbitrarily large region. Using this strategy, we present a quantum algorithm for multivariate polynomial interpolation over the real and complex numbers. To simplify the analysis, we allow the algorithm to work with arbitrarily precise inputs and outputs over $\R$ or $\C$; in practice, sufficiently fine discretization of the space could achieve similar performance. We also consider multivariate polynomial interpolation over finite fields, where our algorithm can be viewed as a generalization of the univariate polynomial interpolation algorithm proposed in \cite{CvDHS16}.

To analyze the success probability of our approach, we relate it to the tensor rank problem. The rank of a given tensor, which is the smallest integer $k$ such that the tensor can be decomposed as linear combination of $k$ simple tensors (i.e., those that can be written as tensor products), was first introduced nearly a century ago. A half century later, with the advent of principal component analysis on multidimensional arrays, the study of tensor rank attracted further attention. However, it has recently been shown that most tensor problems, including tensor rank, are NP-hard \cite{H90,HL13,Shi16}, and restricting these problems to symmetric tensors does not seem to alleviate their NP-hardness \cite{HL13,Shi16}. More specifically, tensor rank is NP-hard over any field extension of $\mathbb{Q}$ and NP-complete over a finite field $\mathbb{F}_q$.

Fortunately, analyzing the success probability of multivariate polynomial interpolation does not require exactly computing the rank of a symmetric tensor. The number of queries needed to achieve success probability $1$ can be translated to the smallest integer $k$ such that \emph{almost every} symmetric tensor can be decomposed as a linear combination of no more than $k$ simple tensors. In turn, this quantity can be related to properties of certain secant varieties, which lets us take advantage of recent progress in algebraic geometry \cite{BBO15,BT15}.

The success probability of our algorithm behaves differently as a function of
the number of queries for the three fields we consider.
Specifically, by introducing
\begin{align}
k_{\C}(n,d):=
\begin{cases}
n+1 & \textrm{$d=2, n\geq 2$;}\\
\lceil\frac{1}{n+1}{\binom{n+d}{d}}\rceil+1 & \textrm{$(n,d)=(4,3),(2,4),(3,4),(4,4)$;}\\
\lceil\frac{1}{n+1}{\binom{n+d}{d}}\rceil  & \textrm{otherwise,}
\end{cases}
\end{align}
we have the following upper bounds on the query complexity:

\begin{theorem}\label{thm:main}
For positive integers $d$ and $n$,
there exists a quantum algorithm for interpolating an $n$-variate polynomial of
degree $d$ over the field $\K$ using at most
\begin{enumerate}
\item $\frac{d}{n+d}\binom{n+d}{d}$ queries for $\K=\Fq$, succeeding with probability $1-O(1/q)$;
\item $2k_\C$ queries for $\K=\R$, succeeding with probability $1$;
\item $k_\C$ queries for $\K=\C$, succeeding with probability $1$.
\end{enumerate}
\end{theorem}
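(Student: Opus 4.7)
The plan is to generalize the univariate quantum algorithm of \cite{CvDHS16} to $n$ variables and then reduce the probability-one analysis to the nondefectivity of secant varieties of Veronese embeddings, for which the Alexander--Hirschowitz theorem is available. Index the degree-at-most-$d$ monomials by a set $M$ with $|M|=\binom{n+d}{d}$ and write $f(\vx)=\sum_{\alpha\in M}c_\alpha\vx^\alpha$. With $k$ queries, the algorithm prepares a uniform superposition over $k$-tuples of input points (over $\Fq^{nk}$ in the finite-field case, and over a bounded box---later taken arbitrarily large---in the continuum cases), queries the oracle, and uses a phase-kickback trick to convert each output register into a phase $\chi(f(\vx_i))$, where $\chi$ is a nontrivial additive character of $\Fq$ or a complex exponential in the continuum cases. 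A final basis change over $M$ produces a state whose measurement yields $\vc$ deterministically precisely when the algebraic system relating the phases to the coefficients has a unique solution at the generic choice of query points.

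The heart of the proof is to recognize this uniqueness condition as a statement about symmetric tensor decomposition. After homogenizing, $f(\vx)$ is a linear functional of the symmetric tensor $\vc$ paired against the rank-one symmetric tensor $(1,\vx)^{\otimes d}$, so the algorithm succeeds with probability $1$ at $k$ generic points over $\C$ exactly when $k$ generic $d$-th powers of linear forms span the space of degree-$d$ homogeneous polynomials in $n+1$ variables, i.e., when the $k$-th secant variety of the degree-$d$ Veronese embedding of $\mathbb{P}^n$ is nondefective. The Alexander--Hirschowitz theorem identifies this threshold as $k_\C(n,d)$, with exactly the listed exceptional cases; the identifiability refinements of \cite{BBO15,BT15} let us then extract $\vc$ from the measurement. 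This proves part~(3).

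For $\R$, a generic real polynomial can require complex-conjugate pairs of summands to realize its $\C$-Waring decomposition, at worst doubling the number of summands needed. Running the $\R$-version of the algorithm with $2k_\C$ real query points simulates $k_\C$ complex-conjugate pairs and yields probability $1$, establishing part~(2). For $\Fq$ the Waring argument is unavailable, so I would instead take $k=\tfrac{d}{n+d}\binom{n+d}{d}=\binom{n+d-1}{d-1}$ and observe that the determinant of the relevant evaluation submatrix is a nonzero polynomial of low degree in the $nk$ query-point coordinates; a Schwartz--Zippel-type bound then shows it vanishes with probability $O(1/q)$, giving part~(1).

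The principal obstacle is the reduction step: faithfully translating the quantum measurement statistics into the nondefectivity condition on the secant variety, in exactly the form to which Alexander--Hirschowitz and the identifiability refinements apply. One must also handle the continuum case with care---showing that the truncation to a bounded working region contributes negligible error in the large-region limit without disrupting the algebraic argument---and formalize the complex-conjugate pairing so that exactly $2k_\C$ real queries (and not more) suffice in the $\R$ case.
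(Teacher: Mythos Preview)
Your reduction to Alexander--Hirschowitz for $\K=\C$ matches the paper's route, but you have misidentified the success condition. The algorithm does not succeed ``when the algebraic system has a unique solution at the generic choice of query points,'' nor is any identifiability statement used; it succeeds with probability equal to the fraction of $\K^J$ covered by the \emph{range} $R_k=\{\sum_{i\le k} y_i v_d(x_i):x_i\in\K^n,\,y_i\in\K\}$, and the Fourier measurement reads off $c$ directly once $R_k$ has full measure. The references \cite{BBO15,BT15} are about real typical and maximal ranks, not identifiability. For $\K=\R$, your conjugate-pairing argument does not go through: if a real $z$ has a complex decomposition $z=\sum_{i\le k_\C}y_i v_d(x_i)$, averaging with the conjugate gives $z=\tfrac12\sum_i\bigl(y_iv_d(x_i)+\bar y_iv_d(\bar x_i)\bigr)$, but the points $x_i,\bar x_i$ are still complex, so you have not produced $2k_\C$ \emph{real} query points. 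The paper instead invokes the bound $k_{\max}(n,d)\le 2k_\C(n,d)$ on the maximum real Waring rank from \cite{BT15}, whose proof is not the pairing you describe.

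The $\Fq$ case is where your plan has the most serious gap. A Schwartz--Zippel bound on a Jacobian minor (or any ``evaluation submatrix'') shows at best that $Z$ is submersive on a $1-O(1/q)$ fraction of its domain; over a finite field this says nothing about the size of the image---already $x\mapsto x^2$ on $\Fq$ has nonzero derivative everywhere except $0$ yet hits only half the field. There is no determinantal condition whose generic nonvanishing forces $|R_k|=q^J(1-O(1/q))$. The paper proves part~(1) by a different, constructive argument: an induction on $n+d$ (\lem{upper}) that splits $v_d(x_1,\dots,x_n)$ into the block of monomials divisible by $x_n$, which after the substitution $x_i\mapsto x_i/x_n$ realizes $X''_{n,d-1}$, and the block obtained by setting $x_n=0$, which realizes $X''_{n-1,d}$. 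This yields $r_{n,d}\le r_{n,d-1}+r_{n-1,d}$ and hence $r_{n,d}\le\binom{n+d-1}{d-1}=\frac{d}{n+d}\binom{n+d}{d}$.
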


Note that these upper bounds can be improved using known results \cite{CvDHS16} for univariate polynomial interpolation (see the final remark in \sec{performance-finite-field}).

The remainder of the paper is organized as follows. After introducing the notation and mathematical background in \sec{preliminaries}, we describe the query model in \sec{query-model} and present our algorithms in \sec{algorithm}. We then analyze the algorithm to establish our query complexity upper bounds in \sec{perf}. In \sec{opt}, we show that our proposed algorithm is optimal for finite fields. Finally, in \sec{open} we conclude by mentioning some open questions.

\section{Preliminaries and notations}\label{sec:preliminaries}

In \sec{basic}, we introduce some notation that is used in the paper, especially when describing the algorithm in \sec{algorithm}. \sec{alg-geometry} reviews basic definitions and concepts from algebraic geometry that arise in our performance analysis in \sec{perf}.

\subsection{Notation and definitions}\label{sec:basic}

Let $f\in \K[x_1,\ldots,x_n]$ be a polynomial of degree at most $d$ over the field
$\K$. We let $x^j:=\prod_{i=1}^n x_i^{j_i}$ for $j\in\J$, where
$\J:=\brc{j\in\N^n: j_1+\cdots+j_n\leq d}$ is the set of allowed exponents with
size $J:=\binom{n+d}{d}$. Thus $x^j$ is a monomial in $x_1,\ldots,x_n$ of degree $j_1+j_2+\cdots+j_n$. With this notation, we write $f(x)=\sum_{j \in \J}c_j x^j$ for some coefficients $\{c_j \in \K : j \in \J\}$.

Access to the function $f$ is given by a black box
that performs $\ket{x,y}\mapsto\ket{x,y+f(x)}$ for all $x \in \K^n$ and $y\in \K$. We will
compute the coefficients of $f$ by performing \emph{phase queries}, which are
obtained by phase kickback over $\K$, as detailed in \sec{query-model}.

For $k$-dimensional vectors $x,y\in
\K^k$, we consider the inner product $\cdot\colon \K^k\times \K^k\to \K$ defined by $x\cdot y =
\sum_{i=1}^k \conj{x}_iy_i$, where $\conj{x}$ is the complex conjugate of $x$
(where we let $\conj{x} = x$ for $x \in \F_q$).
We denote the indicator function for
a set $A\subseteq \R^n$ by $\I_A(z)$, which is $1$ if $z\in A$ and $0$ if $z \notin A$.
We denote a ball of radius $r\in\R^+$ centered at $0$ by $B(r)$.

A \emph{lattice} $\Lambda$ is a discrete additive subgroup of $\R^n$ for positive integer
$n$ generated by $e_1,\ldots,e_n\in\R^n$.
For every element $x\in\Lambda$, we have $x=\sum_{i=1}^n \gamma_i e_i$ for some $\gamma_i\in\Z$
for $i\in\brc{1,\ldots,n}$.
A \emph{fundamental domain} $T$ of $\Lambda$ centered at zero is a subset of $\R^n$ such that
$
T=\brc{\sum_{i=1}^na_ie_i: a_i\in\left[-\frac{1}{2},\frac{1}{2}\right)}
$.
The \emph{dual lattice} of $\Lambda$, denoted by $\widetilde{\Lambda}$, is an
additive subgroup of $\R^n$ generated by $f_1,\ldots,f_n\in\R^n$ satsifying $e_i\cdot f_j=\delta_{ij}$ for $i,j\in\brc{1,\ldots,n}$.

The standard basis over the real numbers is the set
$\brc{\ket{x}: x\in\R^n}$ for positive integer $n$.
The amplitude of a state $\ket{\psi}$ in the standard basis is denoted by
$\psi(x)$ or $\bkt{x|\psi}$.
The standard basis vectors over real numbers are orthonormal in the sense of the Dirac delta function, i.e., $\bkt{x'|x}=\delta^{(n)}(x-x')$ for $x,x'\in\R^n$.

\subsection{Algebraic geometry concepts}\label{sec:alg-geometry}

A subset $V$ of $\K^n$ is an \emph{algebraic set} if it is the set of common zeros of a finite collection of polynomials $g_1,g_2,\ldots,g_r$ with $g_i\in \K[x_1,x_2,\ldots,x_n]$ for $1\leq i\leq r$.

A finite union of algebraic sets is an algebraic set, and an arbitrary intersection of algebraic sets is again an algebraic set. Thus by taking the open subsets to be the complements of algebraic sets, we can define a topology, called the \emph{Zariski topology}, on $\K^n$.

A nonempty subset $V$ of a topological space $X$ is called \emph{irreducible} if it cannot be expressed as the union of two proper (Zariski) closed subsets. The empty set is not considered to be irreducible.
An \emph{affine algebraic variety} is an irreducible closed subset of some $\K^n$.

We define \emph{projective n-space}, denoted by $\mathbb{P}^n$, to be the set of
equivalence classes of $(n+1)$-tuples $(a_0,\ldots,a_n)$ of complex numbers, not all zero, under the equivalence relation given by
$(a_0,\ldots,a_n)\sim(\lambda a_0,\ldots,\lambda a_n)$ for all
$\lambda \in \K$, $\lambda\neq 0$.

A notion of algebraic variety may also be introduced in projective
spaces, giving the notion of a projective algebraic variety: a subset $V \subseteq \mathbb{P}^n$
is an \emph{algebraic set} if it is the set of common zeros of a finite
collection of homogeneous polynomials $g_1,g_2,\ldots,g_r$ with $g_i\in
\K[x_0,x_1,\ldots,x_n]$ for $1\leq i\leq r$. We call open subsets of irreducible projective varieties quasi-projective varieties.

For any integers $n$ and $d$, we define the Veronese map of degree $d$ as
\begin{align}
V_d\colon [x_0:x_1:\cdots:x_n]\mapsto [x_0^d:x_0^{d-1}x_1:\cdots:x_n^d]
\end{align}
where the notation with square brackets and colons denotes homogeneous coordinates and the expression in the output of $V_d$ ranges over all monomials of degree $d$ in $x_0,x_1,\ldots, x_n$. The image of $V_d$ is an algebraic variety called a \emph{Veronese variety}.

Finally, for an irreducible algebraic variety $V$, its \emph{$k$th secant variety} $\sigma_k(V)$ is  the Zariski closure of the union of subspaces spanned by $k$ distinct points chosen from $V$.

For more information about Veronese and secant varieties, refer to Example 2.4 and Example 11.30 in \cite{Harris92}.

\section{Quantum algorithm for polynomial interpolation}\label{sec:quantum-algorithm}

\subsection{The query model}\label{sec:query-model}

Using the standard concept of phase kickback, we encode the results of queries in the phase by performing standard queries in the Fourier basis.  We briefly explain these queries for the three types of fields we consider.

\subsubsection{Finite field $\Fq$}\label{sec:query-finite-field}

The order of a finite field can always be written as a prime power $q:=p^r$.  Let
$e\colon \Fq\to\C$ be the exponential function $e(z):=e^{i 2\pi\tr(z)/p}$ where the
trace function $\tr\colon \Fq\to\F_p$ is defined by
$\tr(z):=z+z^p+z^{p^2}+\cdots+z^{p^{r-1}}$. The Fourier transform over $\Fq$ is
a unitary transformation acting as
$\ket{x}\mapsto\frac{1}{\sqrt{q}}\sum_{y\in\Fq}e(xy)\ket{y}$ for all $x\in\Fq$.
The $k$-dimensional quantum Fourier transform (QFT) is given by $\ket{x}\mapsto
\frac{1}{q^{k/2}}\sum_{y\in\Fq^k}e(x\cdot y)\ket{y}$ for any $x\in\Fq^k$.

A phase query is simply the Fourier transform of a standard query.  By performing an inverse QFT, a query, and then a QFT, we map
$\ket{x,y} \mapsto e(yf(x))\ket{x,y}$
for any $x,y\in\Fq$.

As in the univariate case, our algorithm is nonadaptive, making all queries in parallel for a carefully chosen superposition of inputs.
With $k$ parallel queries, we generate a phase $\sum_{i=1}^k
{y_i}f(x_i)=\sum_{i=1}^k\sum_{j\in\J}{y_i}x_i^j c_j$ for the input $(x,y) \in \Fq^k \times \Fq^k$.
For convenience, we define $Z\colon \Fq^{nk}\times\Fq^k\to\Fq^J$ by $Z(x,y)_j=\sum_{i=1}^k
y_i{x_i}^j$ for $j\in\J$, so that $\sum_{i=1}^k {y_i}f(x_i)=Z(x,y)\cdot c$.

\subsubsection{Real numbers $\R$}\label{sec:query-real}

Let $e\colon \R\to\C$ be the exponential function
$e(x):=e^{i2\pi x}$. For any function $\psi$ whose Fourier transform exists, the QFT over $\R$ acts as
\begin{align}
\int_\R \dd x \, \psi(x)\ket{x} \mapsto \int_\R \dd y \, \Psi(y)\ket{y},
\end{align}
where $\Psi(y)=\int_{\R}\dd x \, e(-xy)\psi(x)$. By Parseval's theorem, the QFT is unitary.

As in the finite field case, we construct a phase query by making a standard query in the Fourier basis, giving
\begin{align}
\int_{\R^2} \dd x \, \dd y \,\psi(x,y)\ket{x,y}
& \mapsto
\int_{\R^2}\dd x \, \dd y \, e(yf(x)) \psi(x,y)\ket{x,y}.
\end{align}
An algorithm making $k$ parallel queries generates a phase $Z(x,y)\cdot c$, where we similarly define $Z\colon \R^{nk}\times\R^k\to\R^J$ by $Z(x,y)_j=\sum_{i=1}^k
y_i{x_i}^j$ for $j\in\J$.

\subsubsection{Complex numbers $\C$}\label{sec:query-complex}

The complex numbers can be viewed as a field extension of the real numbers of degree
2, namely $\C=\R[\sqrt{-1}]$.
For any positive integer $n$, let $\phi_n\colon \C^n\to\R^{2n}$ be an isomorphism
$\phi_n(x):=(\Re(x_1),\Im(x_1),\Re(x_2),\Im(x_2),\ldots,\Re(x_n),\Im(x_n))$, which
we also denote in boldface by $\vx$.
A complex number $x\in\C$ can be stored in a quantum register as
a tensor product of its real and imaginary parts,
$\ket{x}=\ket{\Re(x)}\ket{\Im(x)}$.

A complex function $\psi\colon \C^m\to\C^n$ can be seen as a function with $2m$ variables.
Let $\psi(x)=\tilde{\psi}(\vx)$.
By abuse of notation, we will neglect the tilde and write $\psi(x)=\psi(\vx)$.
Let $e\colon \C\to\C$ be the exponential function $e(x):=e^{i2\pi\Re(x)}$.
For any function $\psi\colon \C\to\C$ whose Fourier transform exists, we define
the transform
\begin{align}
\int_{\R^2}\dd^2\vx \, \psi(\vx)\ket{\vx}
\mapsto
\int_{\R^2}\dd^2\vy \, \Psi(\vy)\ket{\vy},
\end{align}
where $\Psi(\vy)=\int_{\R^2}\dd^2 \vx \, e(-\conj{y}x)\psi(\vx)$.
Note that in general $\Psi(\vy)$ cannot be written in the form of
$\Psi(y)$ with a complex variable $y\in\C$. To encode the output in the phase,
the queries act as
\begin{align}
&
\int_{\R^2}\dd^2\vx
\int_{\R^2}\dd^2\vy \,
\psi(\vx,\vy)
\ket{\vx,\vy} \nonumber\\
& \quad\mapsto
\int_{\R^2}\dd^2\vx
\int_{\R^2}\dd^2\vy
\int_{\R^2}\dd^2\vz \,
\psi(\vx,\vy)
e(-\conj{y}z)
\ket{\vx,\vz} \\
& \quad\mapsto
\int_{\R^2}\dd^2\vx
\int_{\R^2}\dd^2\vy
\int_{\R^2}\dd^2\vz \,
\psi(\vx,\vy)
e(-\conj{y}z)
\ket{\vx,\vz+\vf(x)} \\
& \quad\mapsto
\int_{\R^2}\dd^2\vx
\int_{\R^2}\dd^2\vy
\int_{\R^2}\dd^2\vz
\int_{\R^2}\dd^2\vu \,
\psi(\vx,\vy)
e(-\conj{y}z)e(\conj{u}(z+f(x)))
\ket{\vx,\vu} \\
& \quad\mapsto
\int_{\R^2}\dd^2\vx
\int_{\R^2}\dd^2\vy \,
\psi(\vx,\vy)
e(\conj{y}f(x))
\ket{\vx,\vy},
\end{align}
where we use the identity $\int_{\R^{2}}\dd^2\vy \,
e(y\conj{(x-x')})=\delta^{(2)}(\vx-\vx')$ for $x,x'\in\C$.

An algorithm making $k$ parallel queries generates a phase $\sum_{i=1}^k
\conj{y_i}f(x_i)=\sum_{i=1}^k\sum_{j\in\J} \conj{y_i}x_i^j c_j$.
We define $Z\colon \C^{nk}\times\C^k\to\C^J$ satisfying $Z(x,y)_j=\sum_{i=1}^k
y_i\conj{x_i}^j$ for $j\in\J$, so that $\sum_{i=1}^k \conj{y_i}f(x_i)=Z(x,y)\cdot c$.

\subsection{The algorithm}\label{sec:algorithm}

Our algorithm follows the same idea as in \cite{CvDHS16}: we perform $k$ phase queries in parallel for a carefully-chosen superposition of inputs, such that the output states corresponding to distinct polynomials are as distinguishable as possible.  For a $k$-query quantum algorithm, we consider the mapping $Z\colon \K^{nk}\times \K^k\to \K^{J}$ defined in \sec{query-model} for $\K=\Fq$, $\R$, and $\C$.  Reference \cite{CvDHS16} gave an optimal algorithm for $n=1$ using a uniform superposition over a unique set of preimages of the range $R_k:=Z(\K^{nk},\K^k)$ of $Z$, so we apply the same strategy here. For each $z\in R_k$, we choose a unique $(x,y)\in \K^{nk}\times \K^k$ such that $Z(x,y)=z$. Let $T_k$ be some set of unique representatives, so that $Z\colon T_k\to R_k$ is a bijection.

\subsubsection{$\K=\Fq$}\label{sec:alg-finite-field}

The algorithm generates a uniform superposition over $T_k$, performs $k$ phase
queries, and computes $Z$ in place, giving
\begin{align}
\frac{1}{\sqrt{|T_k|}}\sum_{(x,y)\in T_k}\ket{x,y}
&\mapsto
\frac{1}{\sqrt{|T_k|}}\sum_{(x,y)\in T_k}e(Z(x,y)\cdot c)\ket{x,y}
\label{eq:fq-alg-step3}
\mapsto
\frac{1}{\sqrt{|R_k|}}\sum_{z\in R_k}e(z\cdot c)\ket{z}.
\end{align}

We then measure in the basis of Fourier states $\ket{\widetilde c} := \frac{1}{\sqrt{q^J}}\sum_{z \in \Fq^J} e(z \cdot c)\ket{z}$.  A simple calculation shows that the result of this measurement is the correct
vector of coefficients with probability $|R_k|/q^J$.

\subsubsection{$\K=\R$}\label{sec:alg-real}

We consider a bounded subset $S\subseteq \R^J$ and a set $T_k'$ of unique
preimages of each element in $R_k\cap S$ such that $Z(T_k')=R_k\cap S$ and
$Z\colon T_k'\to R_k\cap S$ is bijective. The algorithm on input $\ket{\psi}$
with support $\supp(\psi)\subseteq R_k\cap S$ gives
\begin{align}
\ket{\psi}=\int_{R_k\cap S}\dd^J z \, \psi(z)\ket{z}
& \mapsto
\int_{R_k\cap S}\dd^J z \, \psi(z)\ket{z}|Z^{-1}(z)\rangle \\
& \mapsto
\int_{R_k\cap S}\dd^J z \, \psi(z)e(z\cdot c)\ket{z}|Z^{-1}(z)\rangle \\\label{eq:psi-r}
& \mapsto
\int_{R_k\cap S}\dd^J z \, \psi(z)e(z\cdot c)\ket{z}=:\ket{\psi_c}.
\end{align}

The choice of $S$ constrains the set of inputs that can be perfectly distinguished by this procedure, as captured by the following lemma.

\begin{lemma}[Orthogonality]\label{lem:orthogonality}
For positive integer $n$, let $m(A):=\int_A\dd^n z$ be the measure of the set $A\subseteq\R^n$.
Let $S$ be a bounded subset of $\R^n$ with nonzero measure.
Let $\ket{\widetilde{c}}=\frac{1}{\sqrt{m(S)}}\int_S\dd^n z \, e(c\cdot z)\ket{z}$ and let
$U$ be the maximal subset of $\R^n$ such that for any $c,c'\in U$ with $c\neq c'$,
\begin{align}
\langle\widetilde{c}'|\widetilde{c}\rangle
&= \frac{1}{m(S)}\int_{S}\dd^n z \, e((c-c')\cdot z) = 0.
\end{align}
Then there is a lattice $\Lambda$ such that $U\in\R^n/\Lambda$.
\end{lemma}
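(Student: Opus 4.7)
The orthogonality hypothesis says that, with $F(\xi) := \int_S e(\xi \cdot z)\,\dd^n z$ (essentially the Fourier transform of $\I_S$), we have $F(c - c') = 0$ for every pair of distinct $c, c' \in U$. The main conceptual hurdle is pinning down what the conclusion ``$U \in \R^n/\Lambda$'' should mean: since $U$ is typically not a single coset of any lattice (even in dimension one one can produce bounded sets $S$ for which a maximal $U$ is a disjoint union of several shifts of $\Z$, hence not a coset), I read it as saying that the projection $\R^n \to \R^n/\Lambda$ restricts to an injection on $U$, equivalently $\Lambda \cap V = \emptyset$ where $V := (U - U) \setminus \brc{0}$. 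Granted this interpretation, the plan has two clean steps: first bound $|U|$, then avoid $V$ by a generic lattice.

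The first step is to show $U$ is countable. Since $S$ is bounded with finite nonzero Lebesgue measure, the relevant ambient Hilbert space is isomorphic to the separable space $L^2(S, \dd^n z)$, and the vectors $\brc{\ket{\widetilde c} : c \in U}$ form an orthonormal family there by hypothesis (their squared norm is $\frac{1}{m(S)}\int_S \dd^n z = 1$). Since any orthonormal family in a separable Hilbert space is at most countable, $U$ and hence $V = (U - U) \setminus \brc{0}$ is countable.

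The second step is a standard genericity argument. Parametrize lattices in $\R^n$ by their generating matrices: every lattice has the form $\Lambda(A) := A \Z^n$ for some $A \in \mathrm{GL}_n(\R) \subset M_n(\R) \cong \R^{n^2}$, and $\mathrm{GL}_n(\R)$ has full Lebesgue measure in $M_n(\R)$. For each fixed $v \in V$ and $k \in \Z^n \setminus \brc{0}$, the equation $A k = v$ imposes $n$ independent linear constraints on the $n^2$ entries of $A$ and so cuts out an affine subspace of codimension $n$, which has Lebesgue measure zero. The union of these ``bad'' subspaces over $(v, k) \in V \times (\Z^n \setminus \brc{0})$ is a countable union of null sets, hence still null, so a generic $A \in \mathrm{GL}_n(\R)$ yields a lattice $\Lambda = \Lambda(A)$ with $\Lambda \cap V = \emptyset$. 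Distinct elements of $U$ then lie in distinct cosets of $\Lambda$, which is the required injection of $U$ into $\R^n/\Lambda$. Beyond settling the interpretation, no step in this argument is delicate, so I do not anticipate a serious obstacle.
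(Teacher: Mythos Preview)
Your argument is sound under your reading of the conclusion, but both the interpretation and the method differ from the paper's. The paper takes ``$U\in\R^n/\Lambda$'' literally---$U$ is meant to be a single coset of $\Lambda$---and \emph{constructs} $\Lambda$ explicitly as $\brc{c:\FF(\I_S)(c)=0}\cup\brc{0}$. It then argues, using that $\langle\widetilde a|\widetilde b\rangle$ depends only on $a-b$ together with maximality of $U$, that this set is closed under negation and addition, and obtains discreteness from the observation that $\int_S e(\delta\cdot z)\,\dd^n z$ has positive real part whenever $\nm{\delta}$ is small compared to the diameter of $S$. The explicit $\Lambda$ is the whole point: the subsequent Completeness lemma and the algorithm require $\brc{\ket{\widetilde c}:c\in\Lambda}$ to be a complete orthonormal basis of $L^2(S)$, something a generic lattice produced by your measure-theoretic avoidance argument will never give.

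That said, your skepticism about the literal reading is well placed. For arbitrary bounded $S$ the paper's stronger conclusion fails (your one-dimensional counterexamples are genuine---e.g.\ $S=[0,1]\cup[2,3]$ already yields a maximal $U$ that is a union of two shifts of $\Z$), and the paper's closure-under-addition step is circular as written: it invokes $c_0+\Lambda\subseteq U$ before $\Lambda$ has been shown to be a group. The intended statement really only holds once $S$ is a fundamental domain of a lattice, which is exactly the hypothesis imposed in the Completeness lemma that follows. So your reinterpretation buys a true statement for general $S$, at the cost of a $\Lambda$ unrelated to $S$ and hence disconnected from the rest of the argument.
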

\begin{proof}
By definition, $c-c'$ must be a zero of the Fourier transform
$\FF(\I_S)$ of the indicator function $\I_S(z)$. We denote
$\Lambda:=\brc{c:\FF(\I_S)(c)=0}\cup\brc{0}$ and let $c_0\in U$.
Clearly $U\subseteq c_0+\Lambda$ as $\Lambda$ contains all zeros.
Since
$\langle\widetilde{c+c_0}|\widetilde{c_0}\rangle=0$ for all $c\in\Lambda\backslash\brc{0}$,
we have $c_0+\Lambda\subseteq U$ and $U=c_0+\Lambda$.
If $c\in\Lambda\backslash\brc{0}$, then
$\langle\widetilde{c_0+c}|\widetilde{c_0}\rangle=\langle\widetilde{c_0}|\widetilde{c_0-c}\rangle=0$
implies that $-c\in\Lambda$.
If $c,c'\in\Lambda\backslash\brc{0}$, then
$\langle\widetilde{c+c_0}|\widetilde{-c'+c_0}\rangle=\langle\widetilde{c+c'+c_0}|\widetilde{c_0}\rangle=0$
implies $c+c'\in\Lambda\backslash\brc{0}$. Therefore $\Lambda$ is an additive
subgroup of $\R^n$.

Now we prove that $\Lambda$ is a lattice.
For $\epsilon>0$, $\delta\in B(\epsilon)$, and $c\in\Lambda$,
\begin{align}
|\langle\widetilde{c+\delta}|\widetilde{c}\rangle|^2=
\abs{\int_S \dd^n z \, e(\delta\cdot z)}^2
\geq \abs{\int_S \dd^n z \, \cos(2\pi\delta\cdot z)}^2> 0
\end{align}
if $S\subseteq B(r)$ for $r<\frac{1}{4\epsilon}$.
Thus $B(\epsilon)$ contains exactly one element in $\Lambda$ and hence
$\Lambda$ is discrete.
\end{proof}
Roughly speaking, \lem{orthogonality} is a consequence of the uncertainty
principle: restricting the support to a finite window limits the precision with which we can determine the Fourier transform.  In the proof, note that a larger window offers
better resolution of the coefficients.

We have shown that the set $\Lambda$ of perfectly distinguishable coefficients
forms a lattice.  We also require the set $\brc{\ket{\widetilde{c}}:c\in\Lambda}$
to be a complete basis.  Since $\bkt{z|\widetilde{c}}=\frac{1}{\sqrt{m(S)}}e(z\cdot
c)$, completeness implies that $\ket{z}$ is of the form
$\sum_{c\in\Lambda}e(-z\cdot c)\ket{\widetilde{c}}$ up to a normalization constant.
More formally, we have the following lemma.

\begin{lemma}[Completeness]\label{lem:completeness}
For positive integer $n$, let $m(A):=\int_A\dd^n z$ be the measure of the set $A\subseteq\R^n$.
Let $\Lambda$ be a discrete additive subgroup of $\R^n$.
Let $S$ be a bounded set with nonzero measure and
$\ket{\widetilde{c}}=\frac{1}{\sqrt{m(S)}}\int_S\dd^n z \, e(z\cdot c)\ket{z}$.
Then $\brc{\ket{\widetilde{c}}:c\in\Lambda}$ forms a complete basis over support $S$
if and only if $S$ is a fundamental domain of the dual lattice of $\Lambda$.
\end{lemma}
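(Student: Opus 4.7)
The plan is to compute the kernel of the operator $\sum_{c \in \Lambda} |\widetilde{c}\rangle\langle \widetilde{c}|$ as a distribution on $S\times S$ and compare it to $\delta^{(n)}(z-z')$, the kernel of the identity on functions supported on $S$. This reduces the completeness question to an application of the Poisson summation formula.

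First I would compute, for $z,z'\in S$,
\begin{align*}
\langle z| \sum_{c\in \Lambda} |\widetilde{c}\rangle\langle \widetilde{c}| |z'\rangle
= \frac{1}{m(S)} \sum_{c\in \Lambda} e((z-z')\cdot c),
\end{align*}
and then invoke Poisson summation. With the paper's conventions $e(x)=e^{i2\pi x}$ and $\widetilde{\Lambda}$ generated by $f_j$ satisfying $e_i\cdot f_j=\delta_{ij}$, Poisson summation reads
\begin{align*}
\sum_{c\in \Lambda} e(c\cdot w) \;=\; \mathrm{vol}(\widetilde{\Lambda}) \sum_{t\in \widetilde{\Lambda}} \delta^{(n)}(w-t)
\end{align*}
as a tempered distribution in $w$. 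Completeness of $\{|\widetilde{c}\rangle:c\in\Lambda\}$ on $S$ is thus equivalent to the distributional identity
\begin{align*}
\frac{\mathrm{vol}(\widetilde{\Lambda})}{m(S)} \sum_{t\in \widetilde{\Lambda}} \delta^{(n)}(z-z'-t) \;=\; \delta^{(n)}(z-z') \qquad \text{for all } z,z'\in S.
\end{align*}

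For the ``if'' direction, I would assume that $S$ is a fundamental domain of $\widetilde{\Lambda}$. Then $m(S)=\mathrm{vol}(\widetilde{\Lambda})$, and the difference of two points in $S$ lies in $\widetilde{\Lambda}$ only when they coincide, so only the $t=0$ term survives on the left and the identity holds. For the ``only if'' direction, I would read off two consequences of the identity above: testing against a bump centered on the diagonal forces $m(S)=\mathrm{vol}(\widetilde{\Lambda})$, while the vanishing of the right side for $z\neq z'$ forces $(S+t)\cap S$ to have measure zero for every $t\in \widetilde{\Lambda}\setminus\{0\}$. The translates $\{S+t:t\in\widetilde{\Lambda}\}$ are therefore pairwise essentially disjoint, and a short covering argument (intersect any auxiliary fundamental domain $F$ of $\widetilde{\Lambda}$ with the translates and add measures, using $m(S)=\mathrm{vol}(\widetilde{\Lambda})=m(F)$) shows that these translates in fact tile $\R^n$ up to a null set, which is the defining property of $S$ being a fundamental domain of $\widetilde{\Lambda}$.

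The main technical obstacle is that the kets $|\widetilde{c}\rangle$ are not $L^2$ elements and $\langle z|z'\rangle=\delta^{(n)}(z-z')$ is only a distributional kernel, so both sides of the completeness identity must be interpreted in the sense of tempered distributions and tested against Schwartz functions supported in $S$; once this framework is set, the Poisson summation step is standard, and the remaining work is the elementary measure-theoretic characterization of fundamental domains.
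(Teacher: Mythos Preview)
Your approach is essentially the same as the paper's: both hinge on the Poisson summation / Dirac comb identity $\sum_{c\in\Lambda} e(c\cdot w)\propto\sum_{t\in\widetilde{\Lambda}}\delta^{(n)}(w-t)$. The only difference is packaging: the paper computes the reconstruction $\sum_{c\in\Lambda} e(-z\cdot c)\,|\widetilde{c}\rangle$ and shows it equals $|(z+\widetilde{\Lambda})\cap S\rangle$, then argues this is $|z\rangle$ exactly when $S$ meets each $\widetilde{\Lambda}$-coset once, whereas you compute the frame operator $\sum_{c}|\widetilde{c}\rangle\langle\widetilde{c}|$ and compare its kernel to the identity kernel. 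Your treatment of the ``only if'' direction is more careful than the paper's (which is rather terse), in that you explicitly separate the volume constraint $m(S)=\mathrm{vol}(\widetilde{\Lambda})$ from the essential disjointness of the translates and then run a covering argument; this buys you a cleaner measure-theoretic conclusion (fundamental domain up to null sets), which is arguably the right statement anyway since the paper's narrow parallelepiped definition of ``fundamental domain'' cannot literally be forced by completeness alone.
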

\begin{proof}
Let $\widetilde{\Lambda}$ be the dual lattice of $\Lambda$.
We observe that (ignoring the normalization constant)
\begin{align}
\sum_{c\in\Lambda}e(-z\cdot c)\ket{\widetilde{c}}
&= \int_S\dd^J z' \sum_{c\in\Lambda} e((z'-z)\cdot c)\ket{z'}
\label{eq:dirac-comb}
= \int_S\dd^J z' \sum_{z_0\in\widetilde{\Lambda}} \delta(z'-z-z_0)\ket{z'} \\
&=\sum_{z_0\in\widetilde{\Lambda}}\I_S(z+z_0)\ket{z+z_0}
=\ket{(z+\widetilde{\Lambda})\cap S}.
\end{align}
In \eq{dirac-comb}, $\sum_{c\in\Lambda}e(z\cdot c)=\sum_{z_0\in\widetilde{\Lambda}}\delta(z-z_0)$
up to a constant factor \cite[Section 7.2]{Hor83}.
The set $(z+\widetilde{\Lambda})\cap S$ cannot be empty, so a fundamental domain of
$\widetilde{\Lambda}$ is a
subset of $S$.
For $z,z'\in\R^n$,
$\langle(z+\widetilde{\Lambda})\cap S|(z'+\widetilde{\Lambda})\cap S\rangle=0$
if $z'\notin z+\widetilde{\Lambda}$, which implies that $S$ is a subset of a fundamental domain
of $\widetilde{\Lambda}$.
\end{proof}
\lem{completeness} further restricts the bounded set $S$ has to be a fundamental
region of $\widetilde{\Lambda}$.  Without loss of generality, one may choose $S$ to
be a fundamental domain of a lattice centered at zero. In the last step, the
algorithm applies the unitary operator
\begin{align}\label{eq:unitary}
\frac{1}{\sqrt{m(S)}}\sum_{c'\in\Lambda}\int_{S}\dd^J z \, e(-z\cdot c')\ket{c'}\bra{z}
\end{align}
to the state $\ket{\psi_c}$ in \eq{psi-r}.
The algorithm outputs $c'\in\Lambda$ with probability
\begin{align}\label{eq:success-prob-r}
\frac{1}{m(R_k\cap S)m(S)}\abs{\int_{R_k\cap S}\dd^J z \, \psi(z)e(z\cdot (c-c'))}^2
\leq \frac{m(R_k\cap S)}{m(S)},
\end{align}
where the upper bound follows from the Cauchy-Schwarz inequality.
The maximum is reached if $\psi(z)=\frac{1}{\sqrt{m(R_k\cap S)}}\I_{R_k\cap S}(z)$ and
$c$ happens to be a lattice point.  If $c\notin\Lambda$, the algorithm returns
the closest lattice point with high probability.

To achieve arbitrarily high precision, one may want to take $S\to\R^J$.  In
this limit, the basis of coefficients is normalized to the Dirac
delta function, i.e., $\bkt{\widetilde{c}'|c}=\delta^{(J)}(c-c')$.
In this case, $\Lambda\to\R^J$ and the unitary operator in \eq{unitary} becomes
the $J$-dimensional QFT over the real numbers.
However, for the interpolation problem, the success probability
$\frac{m(R_k\cap S)}{m(S)}$ is not well-defined in the limit $S\to\R^J$ since
different shapes for $S$ can give different probabilities.
Thus it is necessary to choose a bounded region, and we leave the optimal choice as an open question.

Though the size of the fundamental domain $S$ affects the resolution of the
coefficients, it does not affect the maximal success probability
$\frac{m(R_k\cap S)}{m(S)}$.  This can be seen by scale invariance: for every
$z\in R_k$, there is a preimage $(x,y)$ such that $Z(x,y)=z$. Then $\lambda
z\in R_k$ since $Z(x,\lambda y)=\lambda z$ for any $\lambda\in\R$.  In terms of the bijection $\ell\colon
z\mapsto \lambda z$ for $\lambda\in\R^\times$, we have
$\ell(R_k)=R_k$ and $\ell(R_k\cap S)=R_k\cap\ell(S)$.  Then
$m(R_k\cap\ell(S))=m(\ell(R_k\cap S))=\lambda^{J}m(R_k\cap S)$ and hence
$\frac{m(R_k\cap \ell(S))}{m(\ell(S))}=\frac{m(R_k\cap S)}{m(S)}$.
Thus we can make the precision arbitrarily high by taking $S$ arbitrarily
large, and we call $\frac{m(R_k \cap S)}{m(S)}$ the success probability of the
algorithm.

\subsubsection{$\K=\C$}\label{sec:alg-complex}
We consider a bounded set $S\subseteq \C^J$ and a set $T_k'$ of unique
preimages of each element in $R_k\cap S$ such that $Z(T_k')=R_k$ and $Z\colon T_k'\to
R_k\cap S$ is bijective.
The algorithm on input $\ket{\psi}$ with support $\supp(\psi)\subseteq R_k\cap S$ gives
\begin{align}
\ket{\psi}=
\int_{\phi(R_k\cap S)}\dd^{2J}\vz \,
\psi(\vz)\ket{\vz}
& \mapsto
\int_{\phi(R_k\cap S)}\dd^{2J}\vz \,
\psi(\vz)\ket{\vz}\vert \phi(Z^{-1}(z))\rangle \\
& \mapsto
\int_{\phi(R_k\cap S)}\dd^{2J}\vz \,
\psi(\vz)
e(z\cdot c)
\ket{\vz}\vert \phi(Z^{-1}(z))\rangle \\\label{eq:psi-c}
& \mapsto
\int_{\phi(R_k\cap S)}\dd^{2J}\vz \,
\psi(\vz)
e(z\cdot c)
\ket{\vz} =:\ket{\psi_{\vc}}.
\end{align}

By \lem{orthogonality} and \lem{completeness},
the set $S$ must be a fundamental domain in $\C^{J}$.
Let $\brc{\ket{\widetilde{\vc}}:c\in\Lambda}$ be the measurement basis.
In the last step of the algorithm,
we apply the unitary operator
\begin{align}
\frac{1}{\sqrt{m(S)}}
\sum_{\vc'\in\phi(\Lambda)}
\int_{\phi(S)}\dd^{2J} \vz \, e(-z\cdot c')\ket{\vc'}\bra{\vz}
\end{align}
to the state $\ket{\psi_\vc}$ in \eq{psi-c}.
The algorithm outputs $c'\in\Lambda$ with probability
\begin{align}\label{eq:success-prob-c}
\frac{1}{m(R_k\cap S)m(S)}
\abs{\int_{\phi(R_k\cap S)}\dd^{2J}\vz \, \psi(\vz)e(z\cdot (c-c'))}^2.
\end{align}
Again, since $\ket{\psi}$ is normalized, \eq{success-prob-c} cannot be arbitrarily large.
By the Cauchy-Schwarz inequality, \eq{success-prob-c} is
upper bounded by $\frac{m(R_k\cap S)}{m(S)}$; this maximal success probability is obtained if $\psi(\vz)=\frac{1}{\sqrt{m(R_k\cap
S)}}\I_{\phi(R_k\cap S)}(\vz)$ and $c$ happens to be a lattice point.  If
$c\notin\Lambda$, the algorithm returns the closest lattice point with high
probability.

By the same argument as in \sec{alg-real}, we can show scale invariance holds
for complex numbers: for $\ell\colon z\mapsto \lambda z$ where
$z\in\C^J$ and $\lambda\in\R^\times$,
$\frac{m(R_k\cap S)}{m(S)}=\frac{m(R_k\cap\ell(S))}{m(\ell(S))}$.
Thus we can make the precision of the algorithm arbitrarily high by taking
$S$ arbitrarily large without affecting the maximal success probability.

\subsection{Performance}\label{sec:perf}

We have shown in \sec{alg-finite-field} that the optimal success probability is at most $|R_k|/q^J$ for $\K=\Fq$. For real and complex numbers, we consider a bounded support $S$ in which the algorithm is performed. The success probability of the algorithm with this choice is at most $\frac{m(R_k\cap S)}{m(S)}$, as shown in \eq{success-prob-r} and \eq{success-prob-c}. To establish the query complexity, first we show that if $\dim R_k=J$, the algorithm outputs the coefficients with bounded error.

\begin{lemma}\label{lem:bounded-error}
For positive integers $n,k,d$, let $J:=\binom{n+d}{d}$ and let
$m(A):=\int_A\dd^J z$ be the volume of $A\subseteq R^{J}$.
Let $Z\colon \K^{nk}\times\K^k$, $Z(x,y)=\sum_{i=1}^k y_ix_i^j$ for an infinite field $\K$.
Let $R_k=Z(\K^{nk},\K^k)$ be the range of $Z$.
If $\dim R_k=J$, then $\frac{m(R_k\cap S)}{m(S)}>0$ if $S$ is a fundamental domain centered at $0$.
\end{lemma}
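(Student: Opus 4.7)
The plan is to use the hypothesis $\dim R_k=J$ to produce a nonempty Euclidean-open subset of $R_k$ inside $\K^J$, and then to exploit the scale invariance of $R_k$ established earlier to translate a small piece of it into $S$.

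First I would observe that $Z$ is a polynomial map $\K^{(n+1)k}\to\K^J$, hence smooth (and analytic) for $\K=\R$ or $\C$. The algebraic hypothesis $\dim R_k=J$ forces the differential $dZ$ to attain maximal rank $J$ at some point $(x_0,y_0)$: were this not the case, the constant rank theorem would confine the entire image to a countable union of submanifolds of dimension strictly less than $J$, contradicting $\dim R_k=J$. The implicit function theorem then makes $Z$ an open map on a neighborhood of $(x_0,y_0)$, so $R_k$ contains a nonempty Euclidean-open subset $V\subseteq\K^J$. A cleaner algebraic alternative is to invoke Chevalley's theorem (over $\C$) or Tarski--Seidenberg (over $\R$): the image $R_k$ is constructible (resp.\ semi-algebraic), and since its Zariski closure is irreducible of full dimension $J$ it must equal $\K^J$; a constructible set whose closure is an irreducible variety $X$ necessarily contains a Zariski open dense subset of $X$, which is automatically Euclidean-open.

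Second, I would invoke the scale invariance of $R_k$ that was already used in \sec{alg-real}: since $Z(x,\lambda y)=\lambda Z(x,y)$, we have $\lambda V\subseteq R_k$ for every $\lambda\in\K^\times$. Choose a closed ball $B\subseteq V$ with $m(B)>0$. Because $S$ is a fundamental domain of a lattice centered at $0$, the defining description $S=\{\sum a_ie_i:a_i\in[-\tfrac12,\tfrac12)\}$ shows that $S$ contains an open neighborhood of the origin. Hence for every sufficiently small $\lambda>0$ the scaled ball $\lambda B$ is contained in $S$, giving
\begin{align}
m(R_k\cap S)\ \geq\ m(\lambda B)\ =\ \lambda^{J}\,m(B)\ >\ 0,
\end{align}
which is the desired conclusion (note $m(S)>0$ is automatic because $S$ is a fundamental domain of a full-rank lattice).

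The one nontrivial obstacle is the bridge from the algebraic hypothesis $\dim R_k=J$ to the metric statement that $R_k$ has nonempty Euclidean interior; the rank theorem (or equivalently the structure theory of constructible/semi-algebraic sets) is the right tool, and once a single Euclidean-open subset of $R_k$ is in hand, everything else is a routine application of scale invariance.
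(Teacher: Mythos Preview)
Your proof is correct and follows essentially the same two-step strategy as the paper: first establish that $R_k$ has nonempty Euclidean interior when $\dim R_k=J$, then use the scale invariance $\lambda R_k=R_k$ to push an open piece of $R_k$ into the fundamental domain $S$. The paper obtains the nonempty-interior step by citing \cite{BBO15,BT15} and finishes with a contradiction (if the interior of $R_k$ missed the interior of $S$, scale invariance would force the interior of $R_k$ to be empty), whereas you give a self-contained argument via Chevalley/Tarski--Seidenberg and then scale a ball directly into $S$; these are cosmetic differences rather than a different route.
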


\begin{proof}
$R_k$ is a constructible set for $\K=\C$ and it is a semialgebraic set for $\K=\R$.
By \cite{BBO15} and \cite{BT15}, $R_{k}$ has non-empty interior if $\dim(R_k)=J$ for both cases.

$S$ is a fundamental domain centered at $0$ with finite measure, so we only need to prove that $m(R_k\cap S)$ is of positive measure, or equivalently, that the interior of $R_k$ and the interior of $S$ have non-empty intersection.

If this is not the case, then any interior point of $S$ cannot be in the interior of $R_k$. By scale invariance of $R_k$, any point in $\K^n$ except $0$ cannot be in the interior of $R_k$, which contradicts the fact that $R_k$ has non-empty interior given $\dim(R_k)=J$.
\end{proof}

\lem{bounded-error} shows that for infinite fields, although we perform the
algorithm over a bounded support, the query complexity can be understood by
considering the dimension of the entire set $R_k$.
Moreover, by invoking recent work on typical ranks, we can establish the
minimum number of queries to determine the coefficients almost surely.

Now let $v_d(x_1,x_2,\ldots,x_n)$ be the ${n+d\choose d}$-dimensional vector that contains all monomials with variables $x_1,\ldots,x_n$ of degree no more than $d$ as its entries. Let
\begin{align}
	X_{n,d}:=\{v_d(x_1,x_2,\ldots,x_n):x_1,x_2,\ldots,x_n \in \K\}
\end{align}
where $\K$ is a given ground field. For example, we have
\begin{align}
	X_{3,2}=\{(x_1^2,x_2^2,x_3^2,x_1x_2,x_1x_3,x_2x_3,x_1,x_2,x_3,1)^T:x_1,x_2,x_3\in \K\}.
\end{align}
Our question is to determine the smallest number $k$ such that a generic vector in $\K^{n+d\choose d}$ can be written as a linear combination of no more than $k$ elements from $X_{n,d}$. More precisely, we have $R_k=\{\sum_{i=1}^k c_i v_i: c_i\in \K, v_i\in X_{n,d}\}$, and we ask what is the smallest number $k$ such that $R_k$ has full measure in $\K^{n+d\choose d}$.

Our approach requires basic knowledge of algebraic geometry---specifically, the concepts of Zariski topology, Veronese variety, and secant variety. Formal definitions can be found in \sec{alg-geometry}. For the reader's convenience, we also explain these concepts briefly when we first use them.

Now we make two simple observations.
\begin{enumerate}
\item In general, $v_d(x_1,x_2,\ldots,x_n)$ can be treated as an ${n+d\choose d}$-dimensional vector that contains all monomials with variables $x_1,\ldots,x_n,x_{n+1}$ of degree $d$ as its entries, by simply taking the map $(x_1,x_2,\ldots,x_n)\mapsto (\frac{x_1}{x_{n+1}},\ldots,\frac{x_n}{x_{n+1}})$ and multiplying by $x_{n+1}^d$. For example, applying this mapping to $X_{3,2}$ gives
\begin{equation*}
X_{3,2}'=\{(x_1^2,x_2^2,x_3^2,x_1x_2,x_1x_3,x_2x_3,x_1x_4,x_2x_4,x_3x_4,x_4^2)^T:x_1,x_2,x_3,x_4\in \K\}.
\end{equation*}
The new set $X_{n,d}'$ is slightly bigger than $X_{n,d}$ since it also contains those points corresponding to $x_{n+1}=0$, but this will not affect our calculation since the difference is just a measure zero set in $X_{n,d}'$.
\item The set $X_{n,d}'$ is the Veronese variety. One may also notice that this set is isomorphic to $\big((x_1,x_2,\ldots,x_{n+1})^T\big)^{\otimes d}$ in the symmetric subspace.
\end{enumerate}

These observations imply that instead of studying $R_k$, we can study the new set
\begin{align}
  R_k'=\brc{\sum\limits_{i=1}^k c_i v_i': c_i\in \K, v_i'\in X'_{n,d}}.
\end{align}
In general, we have a sequence of inclusions:
\begin{align}
X_{n,d}'=R_1'\subseteq R_2' \subseteq \cdots \subseteq R_k' \subseteq \cdots \subseteq \K^{n+d \choose d}.
\end{align}
By taking the Zariski closure, we also have
\begin{align}
	\overline{X_{n,d}'}=\overline{R_1'}\subseteq \overline{R_2'} \subseteq \cdots \subseteq \overline{R_k'} \subseteq \cdots \subseteq \K^{n+d \choose d}
\end{align}
where $\overline{R_k'}$ is the $k$th secant variety of the Veronese variety $X_{n,d}'$.

Palatini showed the following \cite{P09,A87}:
\begin{lemma}
If $\dim \overline{R_{k+1}'} \leq \dim \overline{R_k'}+1$, then $\overline{R_{k+1}'}$ is linear.
\end{lemma}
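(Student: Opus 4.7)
The plan is to exploit the identification $\overline{R_{k+1}'} = \overline{\overline{R_k'} + X_{n,d}'}$ (closure of the Minkowski sum, using that $X_{n,d}'$ is a cone, so $\K \cdot X_{n,d}' = X_{n,d}'$) and to show the dimension hypothesis forces $\overline{R_{k+1}'}$ to be invariant under translation in every direction of the ambient $\K^{\binom{n+d}{d}}$; it must then equal the whole ambient space, which is linear.

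First I would fix a point $y \in X_{n,d}'$ and study $\overline{\overline{R_k'} + \K y} \subseteq \overline{R_{k+1}'}$. The addition map $\mu_y\colon \overline{R_k'} \times \K \to \K^{\binom{n+d}{d}}$ sending $(x,\lambda) \mapsto x + \lambda y$ has image of dimension $\dim \overline{R_k'} + 1$ precisely when $\overline{R_k'}$ is \emph{not} stable under translation by $\K y$, and of dimension $\dim \overline{R_k'}$ otherwise. This follows from a standard generic-fiber calculation: the fiber of $\mu_y$ over $z$ is $\brc{\lambda \in \K : z - \lambda y \in \overline{R_k'}}$, which is generically finite in the first case and all of $\K$ in the second.

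I would then split into two cases. In the first case, if for generic $y \in X_{n,d}'$ the variety $\overline{R_k'}$ is $\K y$-stable, then the translation stabilizer $\mathrm{Stab}(\overline{R_k'}) := \brc{v : \overline{R_k'} + \K v \subseteq \overline{R_k'}}$ is a Zariski-closed linear subspace (closed under sums and scaling; closedness follows from $\overline{R_k'}$ being Zariski closed). It contains a Zariski-dense subset of the irreducible $X_{n,d}'$, hence contains $X_{n,d}'$, hence contains its linear span---the entire ambient space $\K^{\binom{n+d}{d}}$, by non-degeneracy of the Veronese. So $\overline{R_k'} = \K^{\binom{n+d}{d}}$ and hence $\overline{R_{k+1}'} = \K^{\binom{n+d}{d}}$, which is linear. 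In the complementary case, for generic $y \in X_{n,d}'$ we have $\dim \overline{\overline{R_k'} + \K y} = \dim \overline{R_k'} + 1 \geq \dim \overline{R_{k+1}'}$ by hypothesis. Combined with the containment $\overline{\overline{R_k'} + \K y} \subseteq \overline{R_{k+1}'}$ of irreducible varieties of equal dimension, the two coincide, so $\overline{R_{k+1}'}$ is itself $\K y$-invariant. Rerunning the stabilizer argument on $\overline{R_{k+1}'}$, its stabilizer is a closed linear subspace containing generic $y \in X_{n,d}'$, hence all of $X_{n,d}'$, hence all of $\K^{\binom{n+d}{d}}$. As $\overline{R_{k+1}'}$ contains $0$ and is translation-invariant in every direction, it equals the full ambient space $\K^{\binom{n+d}{d}}$, which is linear.

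The main subtleties will be verifying that the translation stabilizer is a Zariski-closed linear subspace and carefully invoking non-degeneracy of the Veronese variety uniformly in both cases; the argument also relies on $X_{n,d}'$ being a cone, which holds because the Veronese map is homogeneous of degree $d$ (so $\K \cdot X_{n,d}' = X_{n,d}'$ when the $d$th-power map on $\K$ is surjective, e.g., over an algebraically closed field).
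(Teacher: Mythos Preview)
The paper does not supply its own proof of this lemma: it is stated with attribution to Palatini \cite{P09} and {\AA}dlandsvik \cite{A87}, and the paper immediately uses only the consequence that if $\dim\overline{R_k'}=\binom{n+d}{d}$ then $\overline{R_k'}=\K^{\binom{n+d}{d}}$. So there is nothing in the paper to compare your argument against directly.

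That said, your argument is essentially the classical one. The key mechanism---showing that the ``vertex'' (translation stabilizer) of $\overline{R_{k+1}'}$ is a closed linear subspace that, under the dimension hypothesis, must contain $X_{n,d}'$ and hence the whole ambient space by non-degeneracy of the Veronese---is exactly how Palatini's lemma is usually proved (see, e.g., {\AA}dlandsvik's treatment of joins). Your two-case split and the generic-fiber computation for the map $(x,\lambda)\mapsto x+\lambda y$ are the standard way to make this precise. A few minor comments: (i) the identification $\overline{R_{k+1}'}=\overline{\overline{R_k'}+X_{n,d}'}$ via the cone property is not actually needed in your proof, since you only ever use the containment $\overline{\overline{R_k'}+\K y}\subseteq\overline{R_{k+1}'}$, which holds for any $y\in X_{n,d}'$ directly from the definition of $R_{k+1}'$; so the caveat about $d$th powers can be dropped. (ii) You might make explicit why the stabilizer is closed under scalars: once $V+v=V$ for irreducible closed $V$, one gets $V+\Z v=V$, and the set $\{\lambda\in\K:V+\lambda v\subseteq V\}$ is Zariski closed in $\K$ and contains the infinite set $\Z$, hence equals $\K$. (iii) You conclude $\overline{R_{k+1}'}=\K^{\binom{n+d}{d}}$, which is stronger than the stated ``linear''; this is fine here because the Veronese is non-degenerate, and indeed the paper only uses this stronger consequence.
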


In particular, this shows that if $\dim\overline{R_k'}={n+d\choose d}$, then $\overline{R_k'}={\K}^{n+d\choose d}$.

For an infinite field $\K$, define $k_{\K}$ to be the smallest integer such that $\frac{m(R_{k_{\K}}\cap S)}{m(S)}=1$.  Thus $k_{\K}$ represents the minimal number of queries such that our algorithm succeeds with probability $1$. For the finite field case $\K=\mathbb{F}_q$, we only require that $\frac{m(R_{k_{\mathbb{F}_q}}\cap S)}{m(S)}$ goes to $1$ when $q$ tends to infinity.

\subsubsection{$\K=\C$}\label{sec:performance-complex}
A theorem due to Alexander and Hirschowitz \cite{AH95}
implies an upper bound on the query complexity of polynomial interpolation over $\C$.
\begin{theorem}[Alexander-Hirschowitz Theorem, \cite{AH95}]
The dimension of $\overline{R_k'}$ satisfies
\begin{align}
	\dim \overline{R_k'}=
\begin{cases} k(n+1)-\frac{k(k-1)}{2} & \textrm{$d=2, 2\leq k\leq
n$;}\\ \binom{n+d}{d}-1 &
\textrm{$(d,n,k)=(3,4,7),(4,2,5),(4,3,9),(4,4,14)$;}\\
\min\{k(n+1),\binom{n+d}{d}\}  & \textrm{otherwise.} \end{cases}
\end{align}
\end{theorem}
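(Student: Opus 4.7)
The plan is to use Terracini's lemma to reduce the dimension of $\overline{R_k'} = \sigma_k(X'_{n,d})$ to a Hilbert-function problem on a scheme of double points in $\mathbb{P}^n$, handle the generic non-defective cases by induction on $(n,d)$ via the Horace degeneration method, and verify the listed exceptional configurations by explicit construction. Concretely, Terracini's lemma says that for generic $\ell_1^d,\ldots,\ell_k^d \in X'_{n,d}$ (with the $\ell_i$ linear forms), the affine tangent space to $\sigma_k(X'_{n,d})$ at $\ell_1^d+\cdots+\ell_k^d$ is the sum of the tangent spaces $T_{\ell_i^d} X'_{n,d}$. Since a tangent vector at $\ell^d$ has the form $\tfrac{d}{dt}\big|_{t=0}(\ell+tm)^d = d\,\ell^{d-1} m$ for some linear $m$, this tangent sum is exactly the space of degree-$d$ forms in the ideal generated by $\ell_i^{d-1} \cdot L$, with $L$ the space of linear forms. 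Equivalently, $\binom{n+d}{d} - \dim \overline{R_k'} = \dim (I_{2Z})_d$, where $2Z$ is a scheme of $k$ generic \emph{double points} in $\mathbb{P}^n$. The naive parameter count gives the expected value $\min\{k(n+1),\binom{n+d}{d}\}$, so the task reduces to showing that $(I_{2Z})_d$ has exactly this expected codimension apart from the listed cases.

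Next, I would verify the defective cases by exhibiting explicit hypersurfaces. For $d=2$ and $2 \leq k \leq n$, a quadric singular at $k$ generic points $p_1,\ldots,p_k$ must vanish on their whole linear span, reducing to a quadric on the quotient $\mathbb{P}^{n-k}$; the extra $\binom{k}{2}$ conditions account for the defect giving $k(n+1) - k(k-1)/2$. Each sporadic case $(d,n,k) = (4,2,5),(3,4,7),(4,3,9),(4,4,14)$ arises from a unique unexpected hypersurface---for example, the $5$ generic points in $\mathbb{P}^2$ lie on a unique smooth conic $C$, so $C^2$ is a quartic singular at all of them, producing one extra condition and a defect of $1$; the other three sporadic cases are analogous classical constructions.

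For the remaining cases I would induct on $(n,d)$ using the Castelnuovo--Horace restriction sequence. Fix a hyperplane $H \subset \mathbb{P}^n$ and specialize some number $a$ of the double points to lie on $H$. The exact sequence
\begin{equation*}
0 \to \mathcal{I}_{\mathrm{Res}_H(2Z)}(d-1) \to \mathcal{I}_{2Z}(d) \to \mathcal{I}_{2Z \cap H,\,H}(d) \to 0
\end{equation*}
relates $(I_{2Z})_d$ to a degree-$(d-1)$ ideal on $\mathbb{P}^n$ (the residual) and a degree-$d$ ideal on $\mathbb{P}^{n-1}$ (the hyperplane trace). Choosing $a$ to balance the expected dimensions of the two outer terms lets the inductive hypotheses pin down $\dim(I_{2Z})_d$ and propagate non-defectivity.

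The main obstacle, and where the real depth of Alexander--Hirschowitz lies, is that the naive hyperplane specialization does not always produce a residual scheme of the correct size: a double point placed on $H$ contributes a point and a tangent direction to the hyperplane trace but leaves behind a full simple point on $\mathbb{P}^n$, which can over- or under-count. Alexander and Hirschowitz resolved this with the \emph{differential Horace method}, in which a double point is split into its reduced part plus a normal tangent vector and the two pieces are specialized to $H$ independently. Managing the combinatorics of how many points to degenerate in each mode, and verifying that the resulting cohomology maps remain of maximal rank, is the technical heart of the argument; combined with direct computation in the base cases ($n=1$ and small $d$), this completes the proof.
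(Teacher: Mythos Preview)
The paper does not prove this statement at all: it is quoted verbatim as a theorem of Alexander and Hirschowitz \cite{AH95} and used as a black box to compute $k_{\C}(n,d)$. There is no argument in the paper to compare your proposal against.

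That said, your outline is a faithful high-level sketch of the actual Alexander--Hirschowitz argument. The reduction via Terracini's lemma to the Hilbert function of a union of $k$ generic double points in $\mathbb{P}^n$ is exactly the standard starting point; your identification of the tangent space at $\ell^d$ with $\ell^{d-1}\cdot L$ is correct, and the resulting formula $\binom{n+d}{d}-\dim\overline{R_k'}=\dim(I_{2Z})_d$ is the one that drives the proof. Your treatment of the defective cases is also accurate: the $d=2$ case really does reduce to the linear span argument you give, and each of the four sporadic defects is witnessed by a single unexpected hypersurface (the squared conic through five points in $\mathbb{P}^2$, the cubic through seven points in $\mathbb{P}^4$, etc.). The inductive engine you describe---restriction to a hyperplane via the Castelnuovo exact sequence, with the differential Horace refinement to handle the mismatch between trace and residual---is precisely the method of \cite{AH95}. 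So your proposal is correct in outline and strictly more informative than what the paper provides; the only caveat is that, as you yourself note, the combinatorial bookkeeping in the differential Horace step is genuinely intricate, and a full proof would have to carry it out rather than gesture at it.
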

\noindent Thus, the minimum $k$ to make
$\overline{R_k'}=\mathbb{C}^{\binom{n+d}{d}}$ is
\begin{align}
k_{\mathbb{C}}(n,d):= \begin{cases} n+1 & \textrm{$d=2, n\geq 2$;}\\
\lceil\frac{1}{n+1}{\binom{n+d}{d}}\rceil+1 &
\textrm{$(n,d)=(4,3),(2,4),(3,4),(4,4)$;}\\
\lceil\frac{1}{n+1}{\binom{n+d}{d}}\rceil  & \textrm{otherwise.} \end{cases}
\end{align}
By parameter counting, we see that $R_k$ is of full measure in $R_k'$.
It remains to show that $R_k'$ is of full measure in its Zariski closure $\overline{R_k'}$:

\begin{theorem}\label{thm:rk'}
$R_k'$ is of full measure in $\overline{R_k'}$.
\end{theorem}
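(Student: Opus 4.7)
The plan is to realize $R_k'$ as the image of a morphism of algebraic varieties and then invoke Chevalley's constructibility theorem, combined with the fact that proper Zariski-closed subvarieties of an irreducible complex variety are null sets for the natural volume on that variety.

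First, I would note that $R_k'$ is precisely the image of the polynomial map
\begin{equation*}
\phi\colon (\C^{n+1})^k\times \C^k\to \C^{\binom{n+d}{d}}, \qquad \big((u^{(1)},\ldots,u^{(k)}),(c_1,\ldots,c_k)\big)\mapsto \sum_{i=1}^k c_i\, v_d(u^{(i)}),
\end{equation*}
where $v_d(u)$ denotes the vector of all degree-$d$ monomials in the entries of $u\in\C^{n+1}$. The domain is irreducible, so the Zariski closure of its image---namely $\overline{R_k'}$, the $k$th secant variety of the Veronese variety $X_{n,d}'$---is also irreducible.

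Next, I would apply Chevalley's theorem, which says that the image of a morphism of algebraic varieties is constructible, i.e., a finite union of locally Zariski-closed sets. Hence $R_k'$ is constructible. A standard consequence is that a constructible set which is dense in an irreducible variety $V$ must contain a Zariski-open dense subset of $V$. Applied to $R_k'\subseteq \overline{R_k'}$, this yields a Zariski-open dense subset $U\subseteq R_k'$ of $\overline{R_k'}$.

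Finally, the complement $\overline{R_k'}\setminus U$ is a proper Zariski-closed subvariety of the irreducible variety $\overline{R_k'}$ and therefore has strictly smaller complex dimension. Intersected with the smooth locus of $\overline{R_k'}$, it is a finite union of complex-analytic submanifolds of real dimension strictly less than $2\dim_{\C}\overline{R_k'}$, hence has measure zero with respect to the natural volume on $\overline{R_k'}$ that underlies the success probability in \sec{alg-complex}. Consequently $R_k'\supseteq U$ has full measure in $\overline{R_k'}$, as claimed. The only genuine subtlety is this final measure-theoretic step: rigorously deducing that a proper Zariski-closed subvariety is a null set requires stratifying by the smooth and singular loci and inducting on dimension, but the fact itself is classical.
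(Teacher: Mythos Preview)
Your proposal is correct and follows essentially the same route as the paper: realize $R_k'$ as the image of a morphism, invoke Chevalley's theorem (the paper cites Exercise~3.19 of Chapter~II in Hartshorne) to conclude it is constructible, deduce that it contains a nonempty Zariski-open subset of its closure, and observe that the complement therefore has measure zero. Your write-up is somewhat more explicit about irreducibility and the measure-theoretic step, but the argument is the same.
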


\begin{proof}
$R_k'$ is just the image of the map
$(Q_1,Q_2,\ldots,Q_k)\mapsto (Q_1+Q_2+\cdots+Q_k)$. By Exercise $3.19$ in Chapter II of
\cite{H77}, $R_k'$ is a constructible set, so it contains an
open subset of each connected component of $\overline{R_k'}$. Therefore its complement is of
measure $0$.
\end{proof}

This immediately implies the following:

\begin{corollary}
$R_k$ has measure $0$ in $\mathbb{C}^{n+d\choose d}$ for $k<k_{\mathbb{C}}(n,d)$ and measure $1$ in $\mathbb{C}^{n+d\choose d}$ for $k\geq k_{\mathbb{C}}(n,d)$.
\end{corollary}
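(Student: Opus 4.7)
The plan is to reduce everything to the analogous dichotomy for $R_k'$, which has already been set up via the Alexander-Hirschowitz theorem together with \thm{rk'}, and then transfer the conclusion back to $R_k$ using the homogenization correspondence discussed immediately before the definition of $R_k'$. The two key ingredients will be the chain of inclusions $R_k\subseteq R_k'\subseteq\overline{R_k'}$, obtained by identifying $X_{n,d}$ with the affine chart $\{x_{n+1}\neq 0\}$ inside $X_{n,d}'$, and the fact that $R_k'\setminus R_k$ lies in a strictly lower-dimensional constructible subset of $\mathbb{C}^{\binom{n+d}{d}}$.

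For $k<k_{\mathbb{C}}(n,d)$, the Alexander-Hirschowitz theorem combined with Palatini's lemma yields $\dim\overline{R_k'}<\binom{n+d}{d}$, so $\overline{R_k'}$ is a proper Zariski-closed subset of $\mathbb{C}^{\binom{n+d}{d}}$ and hence has Lebesgue measure zero. Since $R_k\subseteq\overline{R_k'}$, the set $R_k$ also has measure zero in $\mathbb{C}^{\binom{n+d}{d}}$.

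For $k\geq k_{\mathbb{C}}(n,d)$, the definition of $k_{\mathbb{C}}$ gives $\overline{R_k'}=\mathbb{C}^{\binom{n+d}{d}}$, and \thm{rk'} tells us that $R_k'$ has full measure in $\overline{R_k'}=\mathbb{C}^{\binom{n+d}{d}}$. It therefore suffices to show that $R_k'\setminus R_k$ has measure zero. Any element of this difference arises from a sum $\sum_i c_i v_i'$ in which at least one $v_i'$ lies in the sublocus $\{x_{n+1}=0\}\subseteq X_{n,d}'$, a proper closed subvariety of strictly smaller dimension than $X_{n,d}'$. Hence $R_k'\setminus R_k$ is the image of a parameter space of strictly smaller dimension than that parametrizing $R_k'$, and its Zariski closure is a proper subvariety of $\mathbb{C}^{\binom{n+d}{d}}$ with Lebesgue measure zero. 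The complement of $R_k$ in $\mathbb{C}^{\binom{n+d}{d}}$ is then a union of two measure-zero sets, so $R_k$ has full measure.

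The only mildly delicate step I anticipate is the transfer from $R_k'$ back to $R_k$: one needs to argue cleanly that the extra points in $R_k'$ coming from $x_{n+1}=0$ form a measure-zero set, rather than accidentally filling up a positive-measure piece of $\mathbb{C}^{\binom{n+d}{d}}$. Everything else is direct bookkeeping on top of the Alexander-Hirschowitz theorem and \thm{rk'}.
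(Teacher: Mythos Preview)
Your overall route is the same as the paper's: use Alexander--Hirschowitz to control $\overline{R_k'}$, invoke \thm{rk'} to get $R_k'$ of full measure in $\overline{R_k'}$, and then transfer to $R_k$. The paper handles the transfer in one sentence (``by parameter counting, $R_k$ is of full measure in $R_k'$''), and your proposal tries to make that sentence precise. That is the step with a gap.

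Your argument for the transfer is to bound $R_k'\setminus R_k$ by the image $\Phi(L)$ of the locus $L$ where some $x_{n+1}^{(i)}=0$, and then claim $\Phi(L)$ is a proper subvariety because $L$ has strictly smaller dimension than the full parameter space. This inference fails: the parameter space for $R_k'$ has (effective) dimension $k(n+1)$, which for $k\ge k_\C$ is typically \emph{strictly larger} than $\binom{n+d}{d}$, so the map has positive-dimensional fibers, and dropping one dimension in the source need not drop the image dimension at all. Concretely, for any $k\ge k_\C+1$ you can sit inside $L$ by fixing one $x^{(i)}$ on the hyperplane $\{x_{n+1}=0\}$ with $c_i=0$ and still realize all of $R_{k-1}'$ with the remaining terms; hence $\Phi(L)$ has full measure, and the containment $R_k'\setminus R_k\subseteq\Phi(L)$ buys you nothing. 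The fix is not to bound the complement but to argue directly about $R_k$: it is the image under $\Phi$ of the Zariski-open dense set $G=\{x_{n+1}^{(i)}\neq 0\ \forall i\}$, hence constructible with $\overline{R_k}=\overline{\Phi(G)}=\overline{\Phi(\overline G)}=\overline{R_k'}=\C^{\binom{n+d}{d}}$, and a constructible set dense in affine space has full Lebesgue measure. That is presumably what the paper's ``parameter counting'' is gesturing at.
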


Thus, as the integer $k$ increases, $\frac{m(R_k'\cap S)}{m(S)}$ suddenly jumps from $0$ to $1$ at the point $k_{\mathbb{C}}(n,d)$, and so does $\frac{m(R_k\cap S)}{m(S)}$. This implies part (3) of \thm{main}.

\subsubsection{$\K=\R$}\label{sec:performance-real}

Now consider the case $\K=\R$. For $d=2$, $(n+1)$-variate symmetric tensors are simply $(n+1) \times (n+1)$ symmetric matrices, so a random $(n+1)$-variate symmetric tensor will be of rank $n+1$ with probability $1$. However, if the order of the symmetric tensors is larger than $2$, the situation is much more complicated. For example, a random bivariate symmetric tensor of order $3$ will be of two different ranks, $2$ and $3$, both with positive probabilities.

From the perspective of algebraic geometry, it still holds that $\overline{R_k'}=\mathbb{R}^{n+d \choose d}$ for $k\geq k_{\C}(n,d)$, and for $k<k_{\mathbb{C}}(n,d)$, $\overline{R_k'}$ is of measure zero in $\R^{n+d\choose d}$. It also holds that $R_k$ is of full measure in $R_k'$. However, the claim that $R_k'$ has full measure in $\overline{R_k'}$ no longer holds over $\R$. As we explained in the proof of \thm{rk'}, $R_k'$ is  the image of the map
$(Q_1,Q_2,\ldots,Q_k)\mapsto (Q_1+Q_2+\cdots+Q_k)$. For an algebraically closed field $\K$, it is known that the image of any map is always a constructible set in its Zariski closure. Thus $R_k'$ is of full measure in $\overline{R_k'}$. Over $\R$, it is easy to verify that the image may not be of full measure in its Zariski closure (a simple counterexample is $x\mapsto x^2$). Consequently, over $\C$, $R_k'$ has non-empty interior for a unique value of $k$, and this value of $k$ is called the \emph{generic rank}. Over $\R$, $R_k'$ is just a semialgebraic set and it has non-empty interior for several values of $k$, which are called the \emph{typical ranks}.

For the univariate case, we have the following theorem:

\begin{theorem}[\cite{CO12,CR11}]
For $n=1$, all integers from $k_{\mathbb{C}}=\lceil \frac{d+1}{2}\rceil$ to $k_{\R}=d$ are typical ranks.
\end{theorem}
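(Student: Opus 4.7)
The plan is to invoke Sylvester's classical theorem on binary forms, which encodes the Waring rank of $f(x_0,x_1)$ of degree $d$ in terms of its apolar ideal $f^{\perp}\subset\R[y_0,y_1]$: the real Waring rank of $f$ equals the smallest degree $r$ such that $(f^{\perp})_r$ contains a square-free polynomial whose $r$ roots are all in $\R$. For generic $f$, $f^{\perp}$ is the complete intersection $(g_1,g_2)$ with $\deg g_1=k_\C$ and $\deg g_2=d-k_\C+2$; the complex generic rank is $k_\C$ since $g_1$ generically has $k_\C$ distinct complex roots. Our task is to show that, as $f$ varies over real binary forms of degree $d$, every value in $\{k_\C,k_\C+1,\ldots,d\}$ is attained as a real rank on a set of positive Lebesgue measure.

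For the smallest typical rank $k_\C$, I would take real linear forms $\ell_1,\ldots,\ell_{k_\C}$ in general position and set $f_0=\sum_i\ell_i^d$; by construction the canonical generator $g_{f_0}$ has $k_\C$ distinct real roots (the affine zeros of the $\ell_i$), so $\mathrm{rank}_\R(f_0)=k_\C$. The condition ``$g_f$ is square-free and totally real-rooted'' is open in the coefficients of $f$, so a positive-measure neighborhood of $f_0$ has real rank $k_\C$. For the largest typical rank $d$, the upper bound $\mathrm{rank}_\R(f)\leq d$ is classical: one can always write a real binary form as a real combination of $d$ suitably chosen $d$-th powers via a Lagrange-type interpolation on real nodes. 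For attainment, choose $f_0$ as a generic product of $d$ distinct real linear factors configured so that, via the complete-intersection structure, no element of $(f_0^{\perp})_r$ has only real roots for any $r<d$; openness of the negated discriminant conditions then extends this to a positive-measure neighborhood on which $\mathrm{rank}_\R=d$.

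For each intermediate $k_\C<k<d$, the plan is to exhibit $f$ whose apolar structure forces the first appearance of a totally real-rooted element to occur exactly at degree $k$. By the complete-intersection structure, the degree-$k$ elements of $f^{\perp}$ have the form $h_1 g_1+h_2 g_2$ for polynomials $h_i$ of controlled degrees, giving a pencil whose root structure one can analyze. By calibrating the number of conjugate pairs among the roots of $g_1$ and tracking how multiplication by the free factors $h_1,h_2$ can ``absorb'' them, one arranges the discriminant locus so that real-rootedness of some element of $(f^{\perp})_r$ first becomes possible precisely at $r=k$. Openness of the resulting root-discriminant conditions then produces the required open set of forms with real rank exactly $k$.

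The main obstacle lies in this last step: matching the number of non-real roots of $g_1$ to the exact jump in the minimal degree at which a totally real-rooted element of $f^{\perp}$ appears requires the careful elimination-theoretic analysis carried out in \cite{CO12,CR11}. The two endpoint arguments are essentially soft consequences of openness plus Sylvester, while the intermediate stratification---verifying that every integer in $[k_\C,d]$ and \emph{only} those integers occur as typical ranks---is where the substantive content of the theorem resides.
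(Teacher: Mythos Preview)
The paper does not give its own proof of this statement: it is quoted as a known result from the literature with the citation \cite{CO12,CR11} and is used as a black box in \sec{performance-real}. So there is no ``paper's proof'' to compare against; your sketch is supplying what the authors deliberately outsourced.

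That said, your outline is broadly faithful to the approach in the cited references. Sylvester's apolarity description of the Waring rank of a binary form, the complete-intersection structure $f^{\perp}=(g_1,g_2)$ with $\deg g_1=k_{\C}$, and the characterization of the real rank as the first degree at which $f^{\perp}$ contains a totally real-rooted square-free element are exactly the ingredients Comon--Ottaviani and Causa--Re use. Your endpoint arguments ($k_{\C}$ via an open neighborhood of $\sum\ell_i^d$ with real $\ell_i$, and $d$ via forms with no totally real element in lower degree) are correct in spirit, and you are right to flag the intermediate stratification as the place where the real work lies. One caution: the heuristic that ``calibrating the number of conjugate pairs of roots of $g_1$'' controls the jump degree is not quite a proof; in \cite{CO12} the intermediate values are handled by an explicit analysis of the pencils $\{h_1 g_1 + h_2 g_2\}$ and their discriminant loci, and the argument that \emph{every} integer in $[k_{\C},d]$ is typical (rather than just some of them) requires care. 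Your proposal correctly identifies this as the nontrivial content and defers to the references, which is exactly what the paper itself does.
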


For the multivariate case $n\geq 2$, it still holds that $k_{\mathbb{C}}(n,d)$ defined in \sec{performance-complex} is the smallest typical rank \cite{BT15}.  According to \cite{BBO15}, every rank between $k_{\mathbb{C}}(n,d)$ and the top typical rank $k_{\mathbb{R}}(n,d)$ is also typical. Thus we only need to study the top typical rank $k_{\mathbb{R}}(n,d)$. Unfortunately, the top typical rank in general is not known. In the literature, considerable effort has been devoted to understanding the maximum possible rank $k_{\max}(n,d)$, which, by definition, is also an upper bound for $k_{\mathbb{R}}(n,d)$.
In particular, we have $k_{\max}(n,2)\leq n+1$ for $n \ge 2$, $k_{\max}(2,4)\leq 11$, $k_{\max}(3,4)\leq 19$, $k_{\max}(4,4)\leq 29$, $k_{\max}(4,3)\leq 15$, and $k_{\max}(n,d)\leq 2\lceil\frac{1}{n+1}{n+d\choose d}\rceil$ otherwise \cite{BT15}.

The above result implies $k_{\mathbb{R}}(n,d)\leq k_{\max}(n,d)\leq 2 k_{\mathbb{C}}(n,d)$. We also mention a few other upper bounds on $k_{\max}(n,d)$. Trivially we have $k_{\max}(n,d)\leq {n+d \choose d}$. In \cite{G96, LT10}, this was improved to $k_{\max}(n,d)\leq {n+d \choose d}-n$. Later work showed that $k_{\max}(n,d)\leq {n+d-1\choose n}$ \cite{BS08}. Jelisiejew then proved that $k_{\max}(n,d)\leq {n+d-1\choose n}-{n+d-5 \choose n-2}$ \cite{J13}, and Ballico and De Paris then improved this to $k_{\max}(n,d)\leq {n+d-1\choose n}-{n+d-5 \choose n-2}-{n+d-6 \choose n-2}$ \cite{BD13}. For small cases, these bounds may be stronger than the bound $k_{\max}(n,d)\leq 2 k_{\mathbb{C}}(n,d)$ mentioned above.

To summarize, we have the following, which implies part (2) of \thm{main}:
\begin{theorem}
As the integer $k$ increases from $k_{\mathbb{C}}(n,d)-1$ to $k_{\mathbb{R}}(n,d)\leq 2k_{\mathbb{C}}(n,d)$, $\frac{m(R_k'\cap S)}{m(S)}$ forms a strictly increasing sequence from $0$ to $1$, and so does $\frac{m(R_k\cap S)}{m(S)}$.
\end{theorem}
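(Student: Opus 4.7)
The plan is to split the claim into three pieces: (a) verify the lower endpoint $k=k_\C(n,d)-1$ gives ratio $0$, (b) verify the upper endpoint $k=k_\R(n,d)$ gives ratio $1$, and (c) establish strict monotonicity at each intermediate step. Once the claim is proved for $R_k'$, the corresponding statement for $R_k$ will follow from the parameter-counting fact, already used earlier in \sec{performance-complex}, that $R_k$ is a full-measure subset of $R_k'$ (the discrepancy lives on the $x_{n+1}=0$ locus and is itself measure zero).

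For (a), I would use that $k<k_\C(n,d)$ forces $\dim\overline{R_k'}<\binom{n+d}{d}$ via the Alexander--Hirschowitz computation recalled in \sec{performance-complex}; hence $R_k'\subseteq\overline{R_k'}$ is contained in a proper real algebraic subvariety of $\R^J$ and is Lebesgue-null. For (b), I would invoke the definition of the top typical rank: $k_\R(n,d)$ is the largest $k$ for which the stratum of tensors of rank exactly $k$ has positive measure, and, since every integer in $[k_\C(n,d),k_\R(n,d)]$ is typical \cite{BBO15}, $\R^J$ decomposes into these strata up to a Lebesgue-null set, giving $m(R_{k_\R}'\cap S)=m(S)$.

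The core of the argument is (c). For each $k$ with $k_\C(n,d)\le k\le k_\R(n,d)$, \cite{BBO15} supplies that $k$ is typical, so the stratum $R_k'\setminus R_{k-1}'$ has strictly positive Lebesgue measure in $\R^J$. I would then transfer this positive measure to the bounded window $S$: pick a Lebesgue density point $v$ of the stratum and use the scale invariance already exploited in \sec{alg-real} (the map $z\mapsto\lambda z$ preserves both $R_k'$ and $R_{k-1}'$), so that $\lambda v$ remains a density point of the stratum for every $\lambda>0$. Sending $\lambda\to 0$ produces density points arbitrarily close to the origin, which lies in the interior of any fundamental domain $S$ centered at $0$. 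This forces $m((R_k'\setminus R_{k-1}')\cap S)>0$, hence $m(R_k'\cap S)>m(R_{k-1}'\cap S)$, which is the required strict increase.

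The principal obstacle lies in step (c): the typical-rank stratum is a priori only guaranteed to have positive measure in the unbounded ambient $\R^J$, and we need it to contribute within the bounded $S$. Scale invariance plus the density-point argument dispatches this cleanly, but it is the step where the specific homogeneous structure of $R_k'$ and the choice of $S$ as a fundamental domain centered at the origin are both essential. A minor additional point is that the passage from $R_k'$ back to $R_k$ incurs only a measure-zero correction, which follows from the same parameter count used in \lem{bounded-error}.
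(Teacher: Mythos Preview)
Your proposal is correct and follows the same line as the paper: the paper states this theorem as a summary of the preceding discussion, relying on exactly the ingredients you identify---the Alexander--Hirschowitz dimension count for part (a), the definition of $k_{\R}(n,d)$ as the top typical rank for part (b), the result of \cite{BBO15} that every integer in $[k_{\C}(n,d),k_{\R}(n,d)]$ is typical for part (c), and the bound $k_{\R}\le 2k_{\C}$ from \cite{BT15}. The passage from $R_k'$ to $R_k$ is likewise handled by the same parameter-counting remark you cite.

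Where you go beyond the paper is in making explicit the transfer of positive measure from the ambient $\R^J$ into the bounded window $S$. The paper addresses this only obliquely (through the scale-invariance discussion in \sec{alg-real} and the proof of \lem{bounded-error}, which shows $m(R_k\cap S)>0$ when $\dim R_k=J$), and does not spell out that the same mechanism applies to each stratum $R_k'\setminus R_{k-1}'$. Your density-point-plus-rescaling argument fills this in cleanly: since both $R_k'$ and $R_{k-1}'$ are cones, the stratum is dilation-invariant, a Lebesgue density point stays a density point under dilation, and shrinking it into the interior of $S$ (which contains a neighborhood of the origin) gives the required $m((R_k'\setminus R_{k-1}')\cap S)>0$. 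This is a genuine (if modest) improvement in rigor over what the paper writes down.
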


\subsubsection{$\K=\Fq$} \label{sec:performance-finite-field}

We link the finite field case with the complex case using the Lang-Weil theorem:

\begin{theorem}[Lang-Weil Theorem, \cite{L54}]\label{thm:lang}
There exists a constant $A(n,d,r)$ depending only on $n, d, r$ such that for any variety $V \subseteq \mathbb{P}^n$ with dimension $r$ and degree $d$, if we define $V$ over a finite field $\Fq$, the number of points in $V$ must satisfy
\begin{align}
\vert N-q^r\vert \leq (d-1)(d-2) q^{r-\frac{1}{2}}+A(n,d,r) q^{r-1}.
\end{align}
\end{theorem}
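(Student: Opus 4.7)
The plan is to prove Lang--Weil by induction on the dimension $r$, reducing the count to absolutely irreducible curves where Weil's Riemann hypothesis for curves over finite fields supplies the main estimate. Before starting the induction, I would reduce to the case that $V$ is absolutely irreducible: if $V$ is irreducible over $\Fq$ but splits into Galois-conjugate components over $\overline{\Fq}$, then the $\Fq$-rational points of $V$ lie on pairwise intersections of conjugate components, whose dimension is strictly less than $r$, so $N = O(q^{r-1})$ and the bound holds trivially.

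For the base case $r=1$, project $V$ birationally to a plane curve of degree at most $d$, so that Castelnuovo's inequality bounds the arithmetic genus by $g \leq (d-1)(d-2)/2$. Weil's theorem for curves then yields $\abs{N - q - 1} \leq 2g\sqrt{q} \leq (d-1)(d-2)\sqrt{q}$, and the leftover additive constants are absorbed into $A(n,d,1)$. For the inductive step, cut $V$ with a pencil of hyperplanes $\{H_t\}_{t\in\mathbb{P}^1}$ having base locus $B$ of dimension $r-2$. A Bertini-type argument shows that all but a bounded (in $n,d$) number of sections $W_t := V \cap H_t$ are absolutely irreducible of dimension $r-1$ and degree $d$. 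Using the fibering identity
\begin{align}
N \;=\; \sum_{t \in \mathbb{P}^1(\Fq)} \abs{W_t(\Fq)} \;-\; q\,\abs{B(\Fq)},
\end{align}
applying the inductive bound to each good $W_t$, bounding each bad $W_t$ trivially by $O(q^{r-1})$, and using $\abs{B(\Fq)} = O(q^{r-2})$, one arrives at $\abs{N - q^r} \leq (d-1)(d-2) q^{r - 1/2} + O(q^{r-1})$.

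The main obstacle will be making the Bertini step effective over $\Fq$, since the classical theorem is a generic statement over infinite fields and in positive characteristic one must rule out wild inseparability phenomena. A clean route is to first project $V$ generically into $\mathbb{P}^{r+1}$, replacing $V$ birationally by a hypersurface and eliminating the dependence on $n$ from the inductive step; one then argues that ``bad'' pencils form a Zariski-closed locus of bounded degree in the parameter space, so that a positive fraction of $\Fq$-rational pencils are admissible. Once this effective Bertini statement is in hand, controlling the contribution from bad hyperplane sections, the overcounting in the base locus, and the precise dependence of the constant $A(n,d,r)$ on the ambient dimension $n$ is routine bookkeeping.
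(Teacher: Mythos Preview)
The paper does not prove this statement: the Lang--Weil theorem is quoted verbatim from \cite{L54} and used as a black box to transfer dimension information from $\C$ to $\Fq$. There is therefore no ``paper's own proof'' to compare against.

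That said, your sketch is essentially the original argument of Lang and Weil. The reduction to the absolutely irreducible case, the induction on $r$ via a pencil of hyperplane sections, the fibering identity $N = \sum_{t}\abs{W_t(\Fq)} - q\,\abs{B(\Fq)}$, and the appeal to Weil's bound for curves in the base case are exactly the ingredients of \cite{L54}. Two points deserve a bit more care if you write this out in full. First, for the base case you should not literally ``project birationally to a plane curve of degree at most $d$'': a generic projection of a degree-$d$ curve in $\mathbb{P}^n$ to $\mathbb{P}^2$ has image of degree $d$, and what you actually use is that the geometric genus of the normalization is at most the arithmetic genus $(d-1)(d-2)/2$ of that plane image, together with a bounded discrepancy between the point counts of the curve and its plane model. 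Second, your effective Bertini step over $\Fq$ is the genuine crux; Lang and Weil handle it by first projecting birationally into $\mathbb{P}^{r+1}$ (as you suggest) and then arguing that the locus of ``bad'' hyperplanes is cut out by equations of bounded degree, so that for $q$ large enough an admissible $\Fq$-rational pencil exists. Once that is pinned down, the rest of the bookkeeping goes through as you indicate.
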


The Lang-Weil theorem shows that when $q$ is large enough, the number of points in a variety over $\Fq$ is very close to $q^{\dim V}$.  So it actually tells us that $\frac{m(R_k'\cap S)}{m(S)}=0$ if $k<k_{\mathbb{C}}(n,d)$. It remains unclear whether $\frac{m(R_k'\cap S)}{m(S)}>0$ for $k=k_{\mathbb{C}}(n,d)$. Once again, for the finite field case, when we talk about the measure, we always assume $q$ is sufficiently large. As in the real field case,  the main challenge now is to study the measure of $R_k'$ in $\overline{R_k'}$.

For the upper bound, recall our notation that $v_d(x_1,x_2,\ldots,x_n)$ is the ${n+d\choose d}$-dimensional vector that contains all monomials with degree no more than $d$ as its entries.

Here we make a slight change to the definition in which we require all those $x_i$s in $v_d$ to be nonzero. We can similarly define $X_{n,d}''$ and $R_k''$.
We prove the following:

\begin{lemma}\label{lem:upper}
Let $r_{n,d}$ be the minimum number such that $\vert R_{r_{n,d}}''\vert=q^{n+d\choose d}-O(q^{{n+d\choose d}-1})$.  Then $r_{n,d}\leq r_{n-1,d}+ r_{n,d-1}$.
\end{lemma}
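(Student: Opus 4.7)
My plan is to prove the recursion by constructing, for a generic target $c \in \Fq^{\binom{n+d}{d}}$, an explicit $k$-term decomposition in $R_k''$ with $k := k_1 + k_2$ where $k_1 := r_{n-1,d}$ and $k_2 := r_{n,d-1}$. The construction starts from the polynomial splitting $f = g(x_1, \ldots, x_{n-1}) + x_n h(x_1, \ldots, x_n)$ with $\deg g \leq d$ and $\deg h \leq d - 1$. By Pascal's identity $\binom{n+d-1}{d} + \binom{n+d-1}{d-1} = \binom{n+d}{d}$, this partitions the coefficients of $f$ into a ``slice $j_n = 0$'' (the coefficients of $g$) and remaining ``slices $j_n \geq 1$'' (the coefficients of $h$ under the reindex $j_n \mapsto j_n - 1$), and both $g$ and $h$ are generic whenever $f$ is.

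The first step is to apply the $(n, d-1)$ inductive hypothesis to $h$, obtaining $h = \sum_{j=1}^{k_2} w_j' v_{d-1}^{(n)}(s_j, t_j)$ with $(s_j, t_j) \in (\Fq^\times)^n$, and to assign the last $k_2$ queries as $p_{k_1+j} := (s_j, t_j)$ and $y_{k_1+j} := w_j'/t_j$ (division legal since $t_j \neq 0$). A direct index-by-index computation shows these queries produce the exact desired coefficient $h_{j', j_n - 1}$ at every slice $j_n \geq 1$ position, while contributing an unwanted $\eta_{j'} := \sum_{j} (w_j'/t_j) s_j^{j'}$ at every slice $j_n = 0$ position. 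The second step is to apply the $(n-1, d)$ hypothesis to $g - \eta$ (generic since $g$ is), obtaining $g - \eta = \sum_{i=1}^{k_1} z_i v_d^{(n-1)}(q_i)$ with $q_i \in (\Fq^\times)^{n-1}$, and tentatively set the first $k_1$ queries to $p_i := (q_i, 1)$, $y_i := z_i$, whose slice $j_n = 0$ contribution is the desired $g - \eta$.

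The main obstacle is that with $t_i = 1$ the first $k_1$ queries also contribute $(g - \eta)_{j'}$ at every slice $j_n \geq 1$ position, contaminating the reconstruction of $h$. To fix this I would promote the $t_i$ to free parameters and jointly solve for $(q_i, t_i, z_i)$ and $(s_j, t_j, w_j')$ so that the total contribution at every position matches $c$. The inductive bound $r_{m,e} \leq \binom{m+e-1}{e-1}$ gives $(n+1)(k_1 + k_2) \geq (n+1)\binom{n+d-1}{d-1} \geq \binom{n+d}{d}$ (the last step reducing to $d \geq 1$), so enough free parameters exist in principle. Making this rigorous would require verifying that the Jacobian of the parameter-to-coefficient map has full rank $\binom{n+d}{d}$ at a well-chosen base point; I expect the two inductive hypotheses to supply surjectivity of the respective blocks of the Jacobian, with cross-terms controllable via the $\eta$-adjustment. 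Once the Jacobian has full rank, the Lang-Weil theorem translates this into $|R''_k| = q^{\binom{n+d}{d}} - O(q^{\binom{n+d}{d}-1})$, yielding $r_{n,d} \leq k$ as claimed.
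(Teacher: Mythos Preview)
Your decomposition $f = g(x_1,\ldots,x_{n-1}) + x_n h(x_1,\ldots,x_n)$ and the plan to invoke the two inductive hypotheses are exactly the paper's, but you make one wrong choice that manufactures the obstacle you then cannot clear: you embed the $(n-1,d)$ batch into $n$ variables by taking the last coordinate equal to~$1$ (or to a free nonzero parameter). The paper instead takes $x_n = 0$ for that batch. With $x_n = 0$, every monomial containing $x_n$ vanishes, so the $(n-1,d)$ batch contributes \emph{only} to the $j_n=0$ block and leaves the $j_n\ge 1$ block untouched. The two batches are then completely decoupled: the $r_{n,d-1}$ queries with all coordinates nonzero, via the bijection $(x_1,\ldots,x_n)\mapsto(1/x_n,x_1/x_n,\ldots,x_{n-1}/x_n)$ on $(\Fq^\times)^n$, realise $q^{\binom{n+d-1}{d-1}}-O(q^{\binom{n+d-1}{d-1}-1})$ distinct values in the first block; for each such choice, the $r_{n-1,d}$ queries with $x_n=0$ translate the second block by a fixed $\eta$ and realise $q^{\binom{n-1+d}{d}}-O(q^{\binom{n-1+d}{d}-1})$ distinct values there. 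Since distinct first-block values already separate the outputs, the total is the product, and no Jacobian analysis is needed.

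Your proposed fix does not close the gap. First, the parameter-count inequality you invoke, $r_{m,e}\le\binom{m+e-1}{e-1}$, is the corollary deduced \emph{from} this lemma, so citing it here is circular. Second, even if the count held, a dimension inequality does not by itself yield density $1-O(1/q)$ over $\Fq$; you would still have to exhibit a point where the differential of the parameter-to-coefficient map is surjective, and your $\eta$-adjustment does not supply one, precisely because with $t_i\ne 0$ the first $k_1$ queries re-enter the $j_n\ge 1$ block and undo what the second batch set up. The sentence ``I expect the two inductive hypotheses to supply surjectivity of the respective blocks of the Jacobian, with cross-terms controllable via the $\eta$-adjustment'' is where the proof is actually missing. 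The one-line remedy is to replace $p_i=(q_i,1)$ by $p_i=(q_i,0)$, after which the cross-terms are identically zero and the product counting goes through; this is exactly what the paper does.
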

\begin{proof}
The proof is by induction on $n+d$.

For $n+d=2$, it is easy to verify $r_{2,2}=3\leq r_{1,2}+r_{2,1}=2+1$. Assume \lem{upper} holds for $n+d\leq m-1$ and consider the pair $(n,d)$ with $n+d=m$. For the sake of readability, we first explain how the induction works for the specific example $(n,d)=(3,2)$, and then generalize our idea to any $(n,d)$.

The vector
\begin{align}
v_2(x_1,x_2,x_3)=(x_1^2,x_2^2,x_3^2,x_1x_2,x_1x_3,x_2x_3,x_1,x_2,x_3,1)^T \in X_{3,2}''
\end{align}
can be rearranged as $(x_3,x_3x_1,x_3x_2,x_3^2,x_1^2,x_2^2,x_1x_2,x_1,x_2,1)^T$. The first $4$ entries can be rewritten as $x_3^2(\frac{1}{x_3},\frac{x_1}{x_3},\frac{x_2}{x_3},1)^T = x_3^2 v_1(\frac{1}{x_3},\frac{x_1}{x_3},\frac{x_2}{x_3})$, and the last $6$ entries form $v_2(x_1,x_2)$.

When $(x_1,x_2,x_3)$ ranges over all $3$-tuples in $\mathbb{F}_q\setminus \{0\}$, $(\frac{1}{x_3},\frac{x_1}{x_3},\frac{x_2}{x_3})$ also ranges over all possible $3$-tuples in $\mathbb{F}_q\setminus \{0\}$. By assumption, if we take linear combinations of $r_{3,1}$ vectors chosen from $X_{3,2}''$, the first $4$ entries will range over no fewer than $q^{3+1\choose 1}-O(q^{{3+1\choose 1}-1})$ different vectors in $\mathbb{F}_q^{3+1\choose 1}$.

For any such linear combination, we can add $r_{2,2}$ extra vectors from $X_{3,2}$ with the restriction that $x_3=0$, which will guarantee these extra vectors do not affect the first $4$ entries. By assumption, the last $6$ entries will range over no fewer than $q^{2+2\choose 2}-O(q^{{2+2\choose 2}-1})$ different vectors in $\mathbb{F}_q^{2+2\choose 2}$.

Thus, in total, we have $\smash{\bigl(q^{3+1\choose 1}-O(q^{{3+1\choose 1}-1})\bigr)\bigl(q^{2+2\choose 2}-O(q^{{2+2\choose 2}-1})\bigr)}$ different vectors in $\mathbb{F}_q^{3+2\choose 2}$ if we take linear combinations of $r_{3,1}+r_{2,2}$ vectors from $X_{3,2}''$, which implies $r_{3,2}\leq r_{3,1}+r_{2,2}$.

For general $(n,d)$, the analogous partition of $v_d(x_1,x_2,\ldots,x_n)$ is still valid. Those ${n+d \choose d}-{n-1+d \choose d}={n+d-1 \choose d-1}$ entries involving $x_n$ will form $x_n^{d-1} v_{d-1}(\frac{1}{x_n},\frac{x_1}{x_n},\ldots,\frac{x_{n-1}}{x_n})$ and the rest will form $v_{d}(x_1,x_2,\ldots,\allowbreak x_{n-1})$. All arguments follow straightforwardly, so we have $r_{n,d}\leq r_{n-1,d}+r_{n,d-1}$ for $n+d=m$ and for any $(n,d)$ by induction.
\end{proof}

\begin{corollary}\label{cor:upper}
$r_{n,d}\leq {n+d-1\choose d-1}$.
\end{corollary}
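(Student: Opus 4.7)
The plan is a short induction on $n+d$ using \lem{upper} together with Pascal's identity. Setting $f(n,d) := \binom{n+d-1}{d-1}$, Pascal's identity gives $f(n,d) = f(n-1,d) + f(n,d-1)$, which mirrors the subadditive bound $r_{n,d} \leq r_{n-1,d} + r_{n,d-1}$ of \lem{upper} exactly. So once the base cases are in hand, the inductive step is immediate.

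The base cases I need are $r_{n,1} \leq 1 = \binom{n}{0}$ and $r_{1,d} \leq d = \binom{d}{d-1}$. The first is a direct check: for a linear polynomial, the single rank-one family $c(x_1,\ldots,x_n,1)^T$ with $c, x_1, \ldots, x_n \in \Fq^\times$ traces out $(q-1)^{n+1}+1 = q^{n+1} - O(q^n)$ distinct vectors in $R_1''$, meeting the required density. For the second, I would reuse the decomposition idea from the proof of \lem{upper} restricted to $n=1$: the univariate vector $v_d(x_1)$ splits into its single constant entry and the $d$ entries of positive $x_1$-degree, where the positive-degree part is the rescaling $x_1\, v_{d-1}(x_1)$. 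Iterating this decomposition (or equivalently extending the recurrence to the edge $n=0$, where trivially $r_{0,d}=1$) yields $r_{1,d} \leq d$.

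Given these base cases, the inductive step is a one-line computation: assuming the bound at $(n-1,d)$ and $(n,d-1)$, \lem{upper} and Pascal together give
\begin{align}
r_{n,d} \leq r_{n-1,d} + r_{n,d-1} \leq \binom{n+d-2}{d-1} + \binom{n+d-2}{d-2} = \binom{n+d-1}{d-1}.
\end{align}
I expect no substantive obstacle; the only thing to be careful about is handling the boundary $n=1$ or $d=1$ of the induction, and the two elementary base cases above dispatch precisely that.
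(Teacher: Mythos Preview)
Your proposal is correct and follows essentially the same approach as the paper: induction on $n+d$ combining \lem{upper} with Pascal's identity, with the inductive step literally identical to the paper's. You are simply more explicit about the boundary cases $n=1$ and $d=1$, whereas the paper declares the single base case $n+d=2$ ``easy to verify'' and leaves the boundaries implicit.
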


\begin{proof}
We use induction on $n+d$. For $n+d=2$, it is easy to verify. If it is true for $n+d=m$, then for $n+d=m+1$, we have $r_{n,d}\leq r_{n-1,d}+r_{n,d-1}\leq {n+d-2 \choose d-1}+{n+d-2 \choose d-2} ={n+d-1 \choose d-1}$.
\end{proof}

$R_k''$ is obviously a subset of $R_k'$, so $k_{\mathbb{F}_q}(n,d)\leq  r_{n,d}$. By combining \thm{lang} and \cor{upper}, we have the following, which implies part (1) of \thm{main}:

\begin{corollary}
$k_{\mathbb{C}}(n,d)\leq k_{\mathbb{F}_q}(n,d)\leq  r_{n,d}\leq {n+d-1\choose d-1} = \frac{d}{n+d}\binom{n+d}{d}$.
\end{corollary}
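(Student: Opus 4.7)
The plan is to prove the four links in the displayed chain separately; three of them are essentially already in place, and the real substance lies in the first inequality.

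The inequality $k_{\mathbb{C}}(n,d) \leq k_{\mathbb{F}_q}(n,d)$ I would prove by contrapositive using Theorem~\ref{thm:lang}. Suppose $k < k_{\mathbb{C}}(n,d)$. Then by the Alexander--Hirschowitz theorem the Zariski closure $\overline{R_k'}$ has dimension strictly less than $\binom{n+d}{d}$, equivalently its projectivization in $\mathbb{P}^{\binom{n+d}{d}}$ has dimension at most $\binom{n+d}{d}-1$. Applying the Lang--Weil bound to $\overline{R_k'}$ defined over $\mathbb{F}_q$ gives
\begin{align}
\bigl|\overline{R_k'}(\mathbb{F}_q)\bigr|
\;\leq\; q^{\dim \overline{R_k'}} + O\bigl(q^{\dim \overline{R_k'}-1/2}\bigr)
\;=\; O\bigl(q^{\binom{n+d}{d}-1}\bigr),
\end{align}
where the implicit constant depends only on $n,d,k$ (since the degree of $\overline{R_k'}$, being that of a secant variety of a fixed Veronese embedding, is a purely geometric invariant independent of $q$). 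Since $R_k' \subseteq \overline{R_k'}$, the ratio $|R_k'|/q^{\binom{n+d}{d}}$ tends to $0$ as $q\to\infty$, so by definition of $k_{\mathbb{F}_q}(n,d)$ we conclude $k < k_{\mathbb{F}_q}(n,d)$.

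The second inequality $k_{\mathbb{F}_q}(n,d) \leq r_{n,d}$ is immediate from the definitions: since $R_k''$ is obtained from $R_k'$ by restricting the $x_i$'s to be nonzero, we have $R_k'' \subseteq R_k'$, so $|R_{r_{n,d}}'| \geq |R_{r_{n,d}}''| = q^{\binom{n+d}{d}} - O(q^{\binom{n+d}{d}-1})$, meaning $r_{n,d}$ queries already push the success probability $|R_k'|/q^{\binom{n+d}{d}}$ to $1 - O(1/q)$. The third inequality $r_{n,d} \leq \binom{n+d-1}{d-1}$ is precisely Corollary~\ref{cor:upper}. The final equality is a routine binomial manipulation,
\begin{align}
\binom{n+d-1}{d-1} \;=\; \frac{(n+d-1)!}{(d-1)!\, n!} \;=\; \frac{d}{n+d}\cdot \frac{(n+d)!}{d!\, n!} \;=\; \frac{d}{n+d}\binom{n+d}{d}.
\end{align}

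The only genuinely nontrivial step is the Lang--Weil application, and the main thing to verify there is that the error constant $A(n,d,r)$ in Theorem~\ref{thm:lang} is uniform in $q$ for our varieties. This is automatic because $\overline{R_k'}$ is the $k$th secant variety of the Veronese $X_{n,d}'$, whose defining data and degree are determined by $n,d,k$ alone; no refinement of Lang--Weil is required.
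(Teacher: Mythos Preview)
Your proof is correct and follows essentially the same route as the paper's: the paper simply states that $R_k'' \subseteq R_k'$ gives $k_{\mathbb{F}_q}(n,d)\leq r_{n,d}$, and then says ``by combining Theorem~\ref{thm:lang} and Corollary~\ref{cor:upper}'' the full chain follows. Your write-up is more explicit about how Lang--Weil is invoked for the lower bound $k_{\mathbb{C}}(n,d)\leq k_{\mathbb{F}_q}(n,d)$, which the paper had already sketched in the paragraph preceding the corollary (``it actually tells us that $\frac{m(R_k'\cap S)}{m(S)}=0$ if $k<k_{\mathbb{C}}(n,d)$''), but the argument is the same.

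One small terminological quibble: since $\overline{R_k'}$ is a cone in $\mathbb{A}^{\binom{n+d}{d}}$, its ``projectivization'' naturally sits in $\mathbb{P}^{\binom{n+d}{d}-1}$ with dimension one less than the affine dimension; what you appear to mean is the projective \emph{closure} in $\mathbb{P}^{\binom{n+d}{d}}$, which preserves the dimension. Either viewpoint leads to the same Lang--Weil estimate $|\overline{R_k'}(\mathbb{F}_q)| = O(q^{\binom{n+d}{d}-1})$, so this does not affect the validity of your argument.
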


\begin{remark}
By combining \cor{upper} with the fact $r_{n,2}\geq k_{\mathbb{C}}(n,2)$, we have $r_{n,2}=n+1$.
\end{remark}

\begin{remark}
It was previously known that $r_{n,1}=1$ \cite{BV97,BCW02} and $r_{1,d}=\lceil\frac{d+1}{2}\rceil$ \cite{CvDHS16}.  We can further refine the upper bound using these boundary conditions:
\begin{align}
r_{n,d}&\leq \sum\limits_{i=0}^{d-2}{n-2+i \choose i} r_{1,d-i}+{d+n-3\choose d-1}\leq \sum\limits_{i=0}^{d-2}{n-2+i \choose i} \frac{d-i+3}{2}+{d+n-3\choose d-1} \nonumber\\
&= \frac{n+d+2}{2} {n+d-3 \choose n-1}-\frac{n-1}{2}{n+d-2\choose n}+{d+n-3\choose d-1}.
\end{align}
\end{remark}

\section{Optimality}\label{sec:opt}

In this section, we show that our algorithm is optimal for the case of finite fields. Specifically, we show that no $k$-query quantum algorithm can succeed with probability greater than $|R_k|/q^J$. This follows by essentially the same argument as in the univariate case \cite{CvDHS16}.

First we show that the final state of a $k$-query algorithm is restricted to a subspace of dimension $|R_k|$.  We prove the following:

\begin{lemma}[{cf.\ Lemma 3 of \cite[arXiv version]{CvDHS16}}]\label{lem:dim-span}
Let $J:=\binom{n+d}{d}$, and let $\ket{\psi_c}$ be the state of any quantum algorithm after $k$ queries, where the black box contains $c\in\Fq^J$.
Then $\dim\spn\{\ket{\psi_c}:c\in\Fq^J\}\leq |R_k|$.
\end{lemma}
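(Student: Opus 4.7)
The plan is to expand the final state of any $k$-query algorithm as a linear combination of at most $|R_k|$ fixed ($c$-independent) vectors, weighted by phases of the form $e(c \cdot z)$ with $z \in R_k$. Write an arbitrary (possibly adaptive) $k$-query algorithm as $\ket{\psi_c} = U_k O_c U_{k-1} \cdots U_1 O_c U_0 \ket{0}$, where the $U_i$ are $c$-independent unitaries and $O_c$ implements the phase query $\ket{x,y,r} \mapsto e(yf(x))\ket{x,y,r}$ (here $r$ stands for any auxiliary workspace). Since $f(x) = \sum_{j \in \J} c_j x^j$, the query phase factorizes as $e(c \cdot z')$ with $z'_j = y x^j$, and by definition $z' \in R_1$.

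I would then prove by induction on $k$ that there exist vectors $\{\ket{\phi_z}\}_{z \in R_k}$, independent of $c$, such that
\begin{align}
\ket{\psi_c} = \sum_{z \in R_k} e(c \cdot z) \ket{\phi_z}.
\end{align}
The base case $k=0$ is immediate. For the inductive step, expand the state immediately before the $k$-th query in the standard basis; by the inductive hypothesis each basis component comes with a coefficient whose $c$-dependence is only through a phase $e(c \cdot z)$ for some $z \in R_{k-1}$. Applying the $k$-th query multiplies the component $\ket{x_k,y_k,r}$ by $e(c \cdot z')$ with $z' \in R_1$; the two phases combine into $e(c \cdot (z+z'))$. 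The final unitary $U_k$ merely mixes the resulting $c$-independent vectors while preserving the phase structure. Bucketing these fixed vectors by the value $z+z'$ yields the claimed decomposition, whence $\spn\{\ket{\psi_c} : c \in \Fq^J\} \subseteq \spn\{\ket{\phi_z} : z \in R_k\}$, which has dimension at most $|R_k|$.

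The key identity driving the induction is the closure $R_{k-1} + R_1 \subseteq R_k$: if $z_j = \sum_{i=1}^{k-1} y_i x_i^j$ and $z'_j = y_k x_k^j$, then $(z+z')_j = \sum_{i=1}^{k} y_i x_i^j = Z(x,y)_j$ for the extended tuple $(x,y) \in \Fq^{nk} \times \Fq^k$. This is just the observation that appending one more query input $(x_k,y_k)$ to a $(k-1)$-query history produces a value in the range of the $k$-query $Z$-map. The only real bookkeeping is tracking that the accumulated phase after $k$ queries always has the form $e(c \cdot Z(x,y))$ for some $(x,y) \in \Fq^{nk} \times \Fq^k$, which follows from additivity of exponents. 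The argument directly mirrors the univariate analog (Lemma 3 of \cite{CvDHS16}), with the multivariate range $R_k$ playing the role previously taken by the univariate range, so I expect no genuine obstacle --- only the need to verify that adaptivity does not enlarge the set of phases beyond $\{e(c \cdot z) : z \in R_k\}$, which the inductive expansion above confirms.
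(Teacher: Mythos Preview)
Your proposal is correct and follows essentially the same approach as the paper: both write $\ket{\psi_c}=\sum_{z\in R_k}e(c\cdot z)\ket{\xi_z}$ for $c$-independent vectors $\ket{\xi_z}$ and conclude the dimension bound from this decomposition. The only cosmetic difference is that the paper unrolls the product $U_kQ_c\cdots Q_cU_0\ket{0}$ in one shot and buckets the resulting sum over query histories $(x,y)\in(\Fq^n)^k\times\Fq^k$ by $Z(x,y)$, whereas you reach the same decomposition by induction on $k$ via the closure $R_{k-1}+R_1\subseteq R_k$; these are equivalent presentations of the same argument.
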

\begin{proof}
Following the same technique as in the proof of Lemma 3 in \cite[arXiv version]{CvDHS16}, consider a general
$k$-query quantum algorithm $U_kQ_cU_{k-1}Q_c\ldots Q_cU_1Q_cU_0$ acting on
a state space of the form $\ket{x,y,w}$ for an arbitrary-sized workspace register $\ket{w}$.
Here $Q_c\colon \ket{x,y}\mapsto e(yf(x))\ket{x,y}$ for $x \in \Fq^n, y\in\Fq$ is the phase query.
Starting from the initial state $\ket{x_0,y_0,w_0}=\ket{0,0,0}$, we can write
the output state in the form
\begin{align}\label{eq:output-state}
\ket{\psi_c}=\sum_{z\in R_k}
e(z\cdot c)\ket{\xi_z},
\end{align}
where with $x=(x_1,\ldots,x_k)\in(\Fq^{n})^k$, $y=(y_1,\ldots,y_k)\in\Fq^k$,
$w=(w_1,\ldots,w_{k+1})$ and $I$ an appropriate index set,
\begin{align}
\ket{\xi_z}=\sum_{(x,y)\in Z^{-1}(z)}
\sum_{\substack{x_{k+1} \in \Fq^n,\\y_{k+1}\in\Fq,\\ w\in I^{k+1}}}
\prn{\prod_{j=0}^{k}\bkt{x_{j+1},y_{j+1},w_{j+1}\abs{U_j}x_j,y_j,w_j}}
\ket{x_{k+1},y_{k+1},w_{k+1}}.
\end{align}
Then $\dim\spn\{\ket{\psi_c}:c\in\Fq^J\}\leq\dim\spn\brc{\ket{\xi_z}:z\in R_k}\leq|R_k|$.
\end{proof}
We also use the following basic lemma about the distinguishability of a set of quantum states in a space of restricted dimension.
\begin{lemma}[Lemma 2 of {\cite[arXiv version]{CvDHS16}}]\label{lem:optimality}
Suppose we are given a state $\ket{\phi_c}$ with $c\in C$ chosen uniformly at random.  Then the probability of correctly determining $c$ with some orthogonal measurement is at most $\dim\spn\{\phi_c:c\in C\}/|C|$.
\end{lemma}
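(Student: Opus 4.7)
The plan is to reduce the bound to a one-line trace computation using Cauchy-Schwarz on the subspace $V := \spn\{\ket{\phi_c} : c \in C\}$. Let $d := \dim V$ and let $P_V$ denote the orthogonal projector onto $V$. An orthogonal measurement that outputs a guess in $C$ may be encoded by an orthonormal family $\{\ket{m_c} : c \in C\}$ (possibly supplemented by additional ``failure'' vectors to complete the basis of the ambient space), which in particular satisfies $\sum_{c \in C}\ket{m_c}\bra{m_c} \leq I$.

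Next I would write the average success probability as $\frac{1}{|C|}\sum_{c \in C}|\langle m_c|\phi_c\rangle|^2$. Since $\ket{\phi_c}\in V$, inserting $P_V$ is free: $\langle m_c|\phi_c\rangle = \langle m_c|P_V|\phi_c\rangle$. Applying Cauchy-Schwarz together with $\|\phi_c\|=1$ then gives
\begin{equation*}
|\langle m_c|\phi_c\rangle|^2 \leq \|P_V\ket{m_c}\|^2 = \tr\!\bigl(P_V \ket{m_c}\bra{m_c}\bigr).
\end{equation*}

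Finally I would sum over $c \in C$ and exchange sum with trace:
\begin{equation*}
\sum_{c \in C} |\langle m_c|\phi_c\rangle|^2 \leq \tr\!\Bigl(P_V \sum_{c \in C}\ket{m_c}\bra{m_c}\Bigr) \leq \tr(P_V) = d,
\end{equation*}
where the second inequality uses $\sum_c\ket{m_c}\bra{m_c}\leq I$ together with positivity of $P_V$. Dividing by $|C|$ delivers the claimed bound $d/|C|$. I do not anticipate any serious obstacle: the argument is standard, and the ``orthogonal measurement'' hypothesis is used exclusively to guarantee $\sum_c \ket{m_c}\bra{m_c}\leq I$, which is the single nontrivial input beyond Cauchy-Schwarz. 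The only minor care needed is to allow a ``failure'' outcome when the measurement basis has more than $|C|$ elements; this is absorbed by the inequality rather than equality in the bound on $\sum_c \ket{m_c}\bra{m_c}$.
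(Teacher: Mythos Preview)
Your argument is correct and is the standard proof of this fact. Note, however, that the present paper does not supply its own proof of this lemma: it is quoted verbatim as Lemma~2 of \cite[arXiv version]{CvDHS16} and used as a black box, so there is no in-paper proof to compare against.
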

Combining these lemmas, the success probability of multivariate interpolation under the uniform distribution over $c \in \Fq^J$ (and hence also in the worst case) is at most $|R_k|/q^J$.

Unfortunately it is unclear how to generalize this argument to the infinite-dimensional case, so we leave lower bounds on the query complexity of polynomial interpolation over $\R$ and $\C$ as a topic for future work.

\section{Conclusion and open problems}\label{sec:open}

In this paper, we studied the number of quantum queries required to determine the coefficients of a degree-$d$ polynomial in $n$ variables over a field $\K$. We proposed a quantum algorithm that works for $\K=\C$, $\R$, or $\Fq$, and we used it to give upper bounds on the quantum query complexity of multivariate polynomial interpolation in each case. Our results show a substantially larger gap between classical and quantum algorithms than the univariate case.

There are still several open questions that remain.  Recall that $k_{\K}$ represents the minimal number of queries required for our algorithm to succeed with probability $1$ over the field $\K$ (or with probability approaching $1$ for large $q$ if $\K=\Fq$).  First, for the finite field case $\K=\Fq$, can we bound $k_{\mathbb{F}_q}$ by $C k_{\mathbb{C}}$ where $C$ is a constant independent of the degree $d$? For the values of $(n,d)$ for which explicit values of $k_{\C}$, $k_{\R}$ and $k_{\Fq}$ are known, we always have $k_{\C}\leq k_{\Fq}\leq k_{\R}$. For example, $k_{\C}(1,d)=k_{\Fq}(1,d)=\lceil \frac{d+1}{2}\rceil \leq d=k_{\R}(1,d)$ and  $k_{\C}(n,2)=k_{\Fq}(n,2)=k_{\R}(n,2)=n+1$.  Thus it is plausible to conjecture that $k_{\Fq}(n,d)\leq k_{\R}(n,d)$, which would imply $k_{\Fq}\leq 2k_{\C}$.

Another question is whether we can always obtain positive success probability with only $k_\mathbb{C}$ queries. We know that $k_{\C}$ queries are sufficient to achieve positive success probability for $\K=\C$ and $\R$, but are they also sufficient for $\K=\Fq$? Indeed, if they are, then $k_{\Fq}\leq 2 k_{\C}$ follows immediately.  To see this, if there is a point $p$ with rank greater than $2k_{\C}$, then consider a line through $p$ and a point $q$ with rank $k_{\C}$.  This line has no other points with rank at most $k_{\C}$, since otherwise $p$ would be of rank no more than $2k_{\C}$, a contradiction. Therefore, the measure of the set of points with rank $k_{\C}$ must be less than a fraction $\frac{1}{q}$ of the whole space, which contradicts the assumption that $k_{\C}$ queries suffice.  Thus there is no point with rank greater than $2k_{\C}$---or in other words, if $k_{\C}$ queries are sufficient to achieve positive success probability, then $2k_{\C}$ queries are sufficient to achieve success probability $1$.

While we considered an algorithm with a bounded working region, it is unclear what is the highest success probability that can be achieved by a general $k$-query algorithm without this restriction (and in particular, whether fewer than $k_{\K}$ queries could suffice to solve the problem with high probability).  Indeed, even for the algorithm we proposed in \sec{algorithm}, it remains open to understand what choice of the region $S$ leads to the highest success probability.  As mentioned in \sec{opt}, it would be useful to establish lower bounds on the query complexity of polynomial interpolation over infinite fields.
Also, as stated in \cite{CvDHS16}, for the univariate case over finite fields, the algorithm is time efficient since the function $Z^{-1}(z)$, i.e., finding a preimage of elements in the range of $Z$, is efficiently computable.
However, for multivariate cases, it remains open whether there is an analogous efficiency analysis.

Finally, Zhandry has placed the quantum algorithm for polynomial interpolation in a broader framework that includes other problems such as polynomial evaluation and extrapolation \cite{Zha15}.  It could be interesting to consider these problems for multivariate polynomials and/or over infinite fields.

\section*{Acknowledgments}

We thank Charles Clark for encouraging us to consider quantum algorithms for polynomial interpolation over the real numbers. J.C. would also like to thank Jun Yu and Chi-Kwong Li for helpful comments in early discussions of the project.

This work received support from the Canadian Institute for Advanced Research, the Department of Defense, and the National Science Foundation (grant 1526380).


\begin{bibdiv}
\begin{biblist}

\bib{A87}{article}{
      author={Adlandsvik, Bjorn},
       title={Joins and higher secant varieties.},
        date={1987},
     journal={Mathematica Scandinavica},
      volume={61},
       pages={213\ndash 222},
}

\bib{AH95}{article}{
      author={Alexander, James},
      author={Hirschowitz, Andr{\'e}},
       title={Polynomial interpolation in several variables},
        date={1995},
     journal={Journal of Algebraic Geometry},
      volume={4},
      number={2},
       pages={201\ndash 222},
}

\bib{BBO15}{report}{
      author={Bernardi, Alessandra},
      author={Blekherman, Grigoriy},
      author={Ottaviani, Giorgio},
       title={On real typical ranks},
        year={2015},
      eprint={arXiv:1512.01853},
}

\bib{BS08}{article}{
      author={Bia{\l}ynicki-Birula, Andrzej},
      author={Schinzel, Andrzej},
       title={Representations of multivariate polynomials by sums of univariate
  polynomials in linear forms},
        date={2008},
     journal={Colloquium Mathematicum},
      volume={112},
      number={2},
       pages={201\ndash 233},
}

\bib{BD13}{article}{
      author={Ballico, Edoardo},
      author={De~Paris, Alessandro},
       title={Generic power sum decompositions and bounds for the Waring rank},
     journal={Discrete \& Computational Geometry},
        year={2017},
       pages={1\ndash 19},
      eprint={arXiv:1312.3494},
}

\bib{BT15}{article}{
      author={Blekherman, Grigoriy},
      author={Teitler, Zach},
       title={On maximum, typical and generic ranks},
        date={2015},
     journal={Mathematische Annalen},
      volume={362},
      number={3-4},
       pages={1021\ndash 1031},
      eprint={arXiv:1402.2371},
}

\bib{BV97}{article}{
      author={Bernstein, Ethan},
      author={Vazirani, Umesh},
       title={Quantum complexity theory},
        date={1997},
     journal={SIAM Journal on Computing},
      volume={26},
      number={5},
       pages={1411\ndash 1473},
}

\bib{CO12}{article}{
      author={Comon, Pierre},
      author={Ottaviani, Giorgio},
       title={On the typical rank of real binary forms},
        date={2012},
     journal={Linear and multilinear algebra},
      volume={60},
      number={6},
       pages={657\ndash 667},
      eprint={arXiv:0909.4865},
}

\bib{CR11}{article}{
      author={Causa, Antonio},
      author={Re, Riccardo},
       title={On the maximum rank of a real binary form},
        date={2011},
     journal={Annali di Matematica Pura ed Applicata},
      volume={190},
      number={1},
       pages={55\ndash 59},
      eprint={arXiv:1006.5127},
}

\bib{CvDHS16}{inproceedings}{
      author={Childs, Andrew~M.},
      author={van Dam, Wim},
      author={Hung, Shih-Han},
      author={Shparlinski, Igor~E.},
       title={Optimal quantum algorithm for polynomial interpolation},
        date={2016},
   booktitle={43rd international colloquium on automata, languages, and programming (ICALP 2016)},
      series={Leibniz International Proceedings in Informatics},
      volume={55},
       pages={16:1\ndash 16:13},
         url={http://drops.dagstuhl.de/opus/volltexte/2016/6298},
      eprint={arXiv:1509.09271}
}

\bib{BCW02}{article}{
      author={de~Beaudrap, J.~Niel},
      author={Cleve, Richard},
      author={Watrous, John},
       title={Sharp quantum vs. classical query complexity separations},
        date={2002},
     journal={Algorithmica},
      volume={34},
       pages={449\ndash 461},
      eprint={arXiv:quant-ph/0011065},
}

\bib{G96}{article}{
      author = {Geramita, Anthony V.},
      author = {Schenck, Henry K.},
       title = {Fat points, inverse systems, and piecewise polynomial functions},
     journal = {Journal of Algebra},
      volume = {204},
      number = {1},
       pages = {116\ndash 128},
        year = {1998},
        issn = {0021-8693},
}

\bib{GS00}{article}{
      author={Gasca, Mariano},
      author={Sauer, Thomas},
       title={Polynomial interpolation in several variables},
        date={2000},
     journal={Advances in Computational Mathematics},
      volume={12},
      number={4},
       pages={377\ndash 410},
}

\bib{H77}{book}{
      author={Hartshorne, Robin},
       title={Algebraic geometry},
      series={Graduate Texts in Mathematics},
   publisher={Springer},
        date={1977},
      volume={52},
}

\bib{Harris92}{book}{
      author={Harris, Joe},
       title={Algebraic geometry: A first course},
      series={Graduate Texts in Mathematics},
   publisher={Springer},
        date={1992},
      volume={133},
        ISBN={9780387977164},
         url={https://books.google.com/books?id=\_XxZdhbtf1sC},
}

\bib{H90}{article}{
      author={H{\r a}stad, Johan},
       title={Tensor rank is {NP}-complete},
        date={1990},
        ISSN={0196-6774},
     journal={Journal of Algorithms},
      volume={11},
      number={4},
       pages={644\ndash 654},
}

\bib{HL13}{article}{
      author={Hillar, Christopher~J.},
      author={Lim, Lek-Heng},
       title={Most tensor problems are {NP}-hard},
        date={2013},
        ISSN={0004-5411},
     journal={Journal of the ACM},
      volume={60},
      number={6},
       pages={45:1\ndash 45:39},
      eprint={arXiv:0911.1393},
         url={http://doi.acm.org/10.1145/2512329},
}

\bib{Hor83}{book}{
      author={H{\"o}rmander, Lars},
       title={The analysis of linear partial differential operators:
  Distribution theory and fourier analysis},
   publisher={Springer},
        date={1983},
        ISBN={9783540121046},
}

\bib{J13}{report}{
      author={Jelisiejew, Joachim},
       title={An upper bound for the {W}aring rank of a form},
        year={2013},
      eprint={arXiv:1305.6957},
}

\bib{KK11}{article}{
      author={Kane, Daniel~M.},
      author={Kutin, Samuel~A.},
       title={Quantum interpolation of polynomials},
        date={2011},
     journal={Quantum Information \& Computation},
      volume={11},
      number={1},
       pages={95\ndash 103},
      eprint={arXiv:0909.5683},
}

\bib{LT10}{article}{
      author={Landsberg, Joseph~M.},
      author={Teitler, Zach},
       title={On the ranks and border ranks of symmetric tensors},
        date={2010},
     journal={Foundations of Computational Mathematics},
      volume={10},
      number={3},
       pages={339\ndash 366},
      eprint={arXiv:0901.0487},
}

\bib{L54}{article}{
      author={Lang, Serge},
      author={Weil, Andr{\'e}},
       title={Number of points of varieties in finite fields},
        date={1954},
     journal={American Journal of Mathematics},
      volume={76},
      number={4},
       pages={819\ndash 827},
}

\bib{Mon12}{article}{
      author={Montanaro, Ashley},
       title={The quantum query complexity of learning multilinear
  polynomials},
        date={2012},
     journal={Information Processing Letters},
      volume={112},
      number={11},
       pages={438\ndash 442},
      eprint={arXiv:1105.3310},
}

\bib{MP11}{article}{
      author={Meyer, David~A.},
      author={Pommersheim, James},
       title={On the uselessness of quantum queries},
        date={2011},
     journal={Theoretical Computer Science},
      volume={412},
      number={51},
       pages={7068\ndash 7074},
      eprint={arXiv:1004.1434},
}

\bib{P09}{article}{
      author={Palatini, F.},
       title={Sulle variet{\`a} algebriche per le quali sono di dimensione
  minore dell'ordinario, senza riempire lo spazio ambiente, una o alcune delle
  variet{`a} formate da spazi seganti},
        date={1909},
     journal={Atti. Accad. Torino},
      volume={44},
       pages={362\ndash 374},
}

\bib{Shi16}{report}{
      author={Shitov, Yaroslav},
       title={How hard is the tensor rank?},
        year={2016},
      eprint={arXiv:1611.01559},
}

\bib{Zha15}{report}{
      author={Zhandry, Mark},
       title={Quantum oracle classification: The case of group structure},
        year={2015},
      eprint={arXiv:1510.08352},
}

\end{biblist}
\end{bibdiv}

\end{document}